\begin{document}

\newenvironment{proof}{{\noindent\bf Proof.}}{\hfill $\Box$ \medskip}

\newtheorem{teo}{Theorem}[section]
\newtheorem{defi}[teo]{Definition}
\newtheorem{pro}[teo]{Proposition}
\newtheorem{lemma}[teo]{Lemma}
\newtheorem{coro}[teo]{Corollary}
\newtheorem{rem}[teo]{Remark}
\newtheorem{cond}[teo]{Condition}
\newtheorem{ass}[teo]{Assumption}

\newcommand{\red}[1]{\textcolor{red}{#1}}

\def \non{{\nonumber}}
\def \noin{{\noindent}}
\def \hat{\widehat}
\def \tilde{\widetilde}
\def \bar{\overline}
\def \P{\mathbb{P}}
\def \E{\mathbb{E}}
\def \R{\mathbb{R}}
\def \Z{\mathbb{Z}}
\def \1{\mathbf 1}

\renewcommand {\theequation}{\arabic{section}.\arabic{equation}}

\title{\large {\bf Inflation, ECB and short-term interest rates:}\\{\bf A new model, with calibration to market data}}
 
\author{
Flavia Antonacci 
\footnote{Department of Economic Studies, University of Chieti-Pescara, Pescara, Italy, CORRESPONDING AUTHOR, flavia.antonacci@unich.it}
\and 
Cristina Costantini
\footnote{Department of Economic Studies and Unit\`a locale INdAM, , University of Chieti-Pescara, Pescara, Italy}
\and 
Fernanda D'Ippoliti
\footnote{RBC Capital Markets, London, UK.}
\and 
Marco Papi
\footnote{School of Engineering, UCBM, Rome, Italy}
 }

\date{August 13, 2020}
\maketitle
     
\begin{abstract}
We propose a new model for the joint evolution 
of the European inflation rate,
the European Central Bank official interest rate and the short-term interest
rate, in a stochastic, continuous time setting. 

We derive the valuation equation for a contingent claim and show that it has a unique solution. The contingent claim payoff may depend on all three economic factors of the model  and the discount factor is allowed to 
include inflation.

Taking as a benchmark the model  
of Ho, H.W., Huang, H.H. and Yildirim, Y., Affine model of inflation-indexed derivatives and inflation risk premium, 
{\em European Journal of Operational Research}, 2014, we show that our 
model performs better on market data from 2008 to 2015. 

Our model is not an affine model. Although in some special cases the solution of the valuation equation might admit a closed form, in general it has to be solved numerically. This can be done efficiently by the algorithm that we provide. Our model uses many fewer parameters than the benchmark model, which partly compensates the higher complexity of the numerical procedure and also 
suggests that our model describes the behaviour of the economic factors more closely. 

\end{abstract}

\noindent
{\bf Keywords.} Inflation, interest rates, inflation-linked 
derivatives, risk-neutral valuation 

\vspace*{0.25cm}

\noindent
{\bf JEL Classification.} C02 $\cdot$ G12 $\cdot$ C6

\vspace*{0.25cm}

\noindent {\bf MSC 2010 Classification.} 91B28 $\cdot$ 91B24 $\cdot$ 35D40 

\vspace*{0.5cm}


\setcounter{equation}{0}
\section{Introduction}

The issuance of sovereign bonds linked to euro area
inflation began with the introduction of bonds indexed to the French
consumer price index (CPI) excluding tobacco (Obligations
Assimilables du Tr\'{e}sor index\'{e}es or OATis) in 1998. In 2003,
Greece, Italy and Germany decided to issue inflation-linked 
bonds too, and nowadays a variety of inflation-linked 
derivatives are quoted in European financial 
markets. Typical examples are {\em inflation indexed }
{\em swaps}, which allow to exchange 
the inflation rate for a fixed interest rate, or {\em inflation caps}, 
which pay out if the inflation exceeds a certain threshold over a given
period (for a detailed introduction to inflation
derivatives, see e.g. \cite{DeaconDerryMirfendereski2004}). 

The pricing of inflation-linked derivatives is related to both interest
rate and foreign exchange theory. In their seminal work 
of 2003, Jarrow and
Yildirim~\cite{JarrowYildirim2003} proposed an approach based
on foreign currency and interest rate derivatives valuation. On the
other hand, there is some empirical and theoretical evidence
that bond prices, inflation, interest rates, monetary policy and
output growth are related. In particular, both inflation 
and interest rates 
are clearly related to the activity of central banks. 

In the present work we propose a model for the joint 
evolution of the European inflation rate, the European Central Bank
(henceforth ECB) official interest rate and the short-term
interest rate, and use it to price European type 
derivatives whose payoff depends potentially on all three factors. 
To the best of our knowledge, ours is the 
first model that takes into account the interaction 
among all these three factors. 
With the 2007-2008 financial crisis it has become clear 
that there is another risk factor underlying bond prices, 
namely credit risk, but we leave the construction 
of a model that incorporates this factor for 
future work. 

Our model is a stochastic, continuous time one. 
More precisely, the ECB interest rate evolves as a pure jump process 
with jump intensity and distribution that depend both on its current value 
and on the current value of inflation. 
As inflation is measured at regular times, 
the inflation rate is modeled as a piecewise constant process that jumps 
at fixed times $t_i$. 
The new value at $t_i$ is given 
by a Gaussian random variable with expectation depending 
on the previous value of the inflation rate and on the current 
value of the ECB interest rate. 
Finally, the short-term interest rate follows a CIR type model 
with reversion towards an affine function of the ECB 
interest rate and diffusion coefficient depending on the spread 
between itself and the ECB interest rate. 

Many models proposed to price inflation indexed
derivatives fall in the class of affine models (see e.g. D'Amico, Kim and Wei \cite{D'AmicoKimWei2018}, 
Ho, Huang and Yildirim \cite{HoHuangYildirim2014} and 
Waldenberger \cite{Waldenberger2017}). Singor et al. 
\cite{Singor-etal2013} consider a Heston-type inflation 
model in combination with a Hull-White model for 
interest rates, with non-zero correlations. Hughston and 
Macrina \cite{HughstonMacrina2008} propose a discrete 
time model based on utility functions. Haubric et al. 
\cite{Haubric-etal2012} develop a discrete time model of 
nominal and real bond yield curves based on several 
stochastic drivers. 

Our model does not fall within the class of affine models, 
and does not reduce to other known models. 
Therefore a certain amount of mathematical work is 
needed to study it, in particular to derive the valuation 
equation for the price of a derivative and to show that it has a 
unique solution. In fact the proof relies on a general result 
by Costantini, Papi and D'Ippoliti 
\cite{CostantiniPapiD'Ippoliti2012} on valuation equations 
for jump-diffusion underlyings. The derivative 
payoff may depend on all three economic factors of the model and the discount factor is allowed to  
include inflation (see Remark \ref{discount}). 
Although in some special cases the price might admit a closed 
form, or might be approximated by a closed form, in 
general it has to be computed numerically. This can be 
done efficiently by the numerical 
algorithm described in Appendix C. 

In order to show that the higher complexity of the 
numerical implementation of our valuation model is 
justified, we compare our valuation model to the well known 
model of D'Amico, Kim and Wei \cite
{D'AmicoKimWei2018} and Ho, Huang and Yildirim 
\cite{HoHuangYildirim2014}). The improvement in the error 
measures is significant (see Tables \ref{tab:error} and 
\ref{tab:error2}).
Note that our model requires many fewer parameters than the Ho, 
Huang ad Yildirim model, which 
compensates the higher complexity of the numerical implementation, at 
least partly. 
Moreover, the fact that our model 
performs better than the Ho, Huang and Yildirim model 
with many fewer parameters suggests that it describes more 
closely the behaviour of the economic factors. 

The paper is organized as follows. In Section~\ref{sectionTheModel},
we introduce the mathematical model. 
In Section~\ref{sectionVE} we derive the valuation
equation. Proofs are postponed to Appendix B. The 
general result by Costantini, Papi and D'Ippoliti 
\cite{CostantiniPapiD'Ippoliti2012} is recalled in Appendix A. 
In Section \ref{sectionNS} we compare our model to the model of 
\cite{HoHuangYildirim2014}: We calibrate both models to 
the market prices of Zero Coupon Inflation Indexed Swaps from 
January 2008 to October 2015 and measure the performances of the two 
models in fitting the market data (Table \ref{tab:error}). We also carry out the 
same analysis for the period 2008-2011, when, 
due to the subprime crisis, interest rates dropped 
drastically and rapidly (Table \ref{tab:error2}).  
In both periods there is a significant improvement in the 
error measures.


\section{The model}\label{sectionTheModel}

\setcounter{equation}{0}

\begin{figure}[!htb]
\center{\includegraphics[scale=0.25]{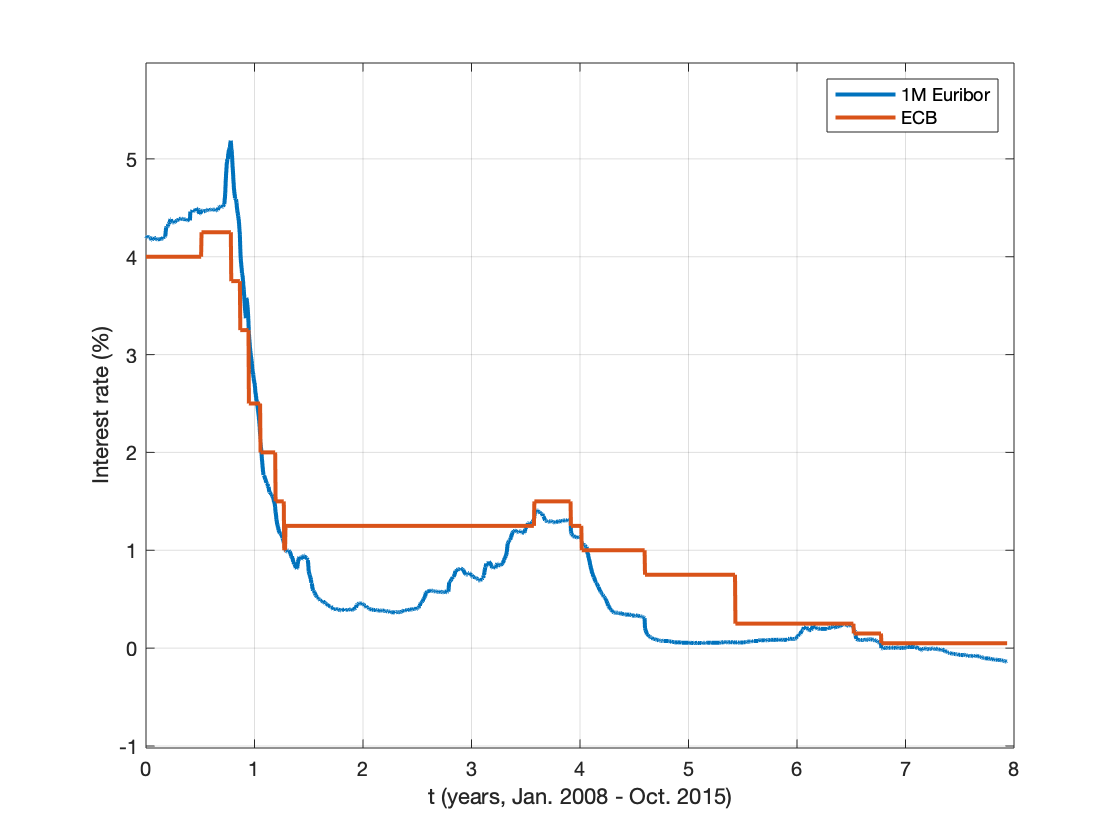}}
\caption[]{{\small Evolution of the official European Central Bank interest rate and the Euribor interest rate 
from January 2008 to October 2015.}}
\label{figrates}
\end{figure}
\begin{figure}[!htb]
\center{\includegraphics[scale=0.25]{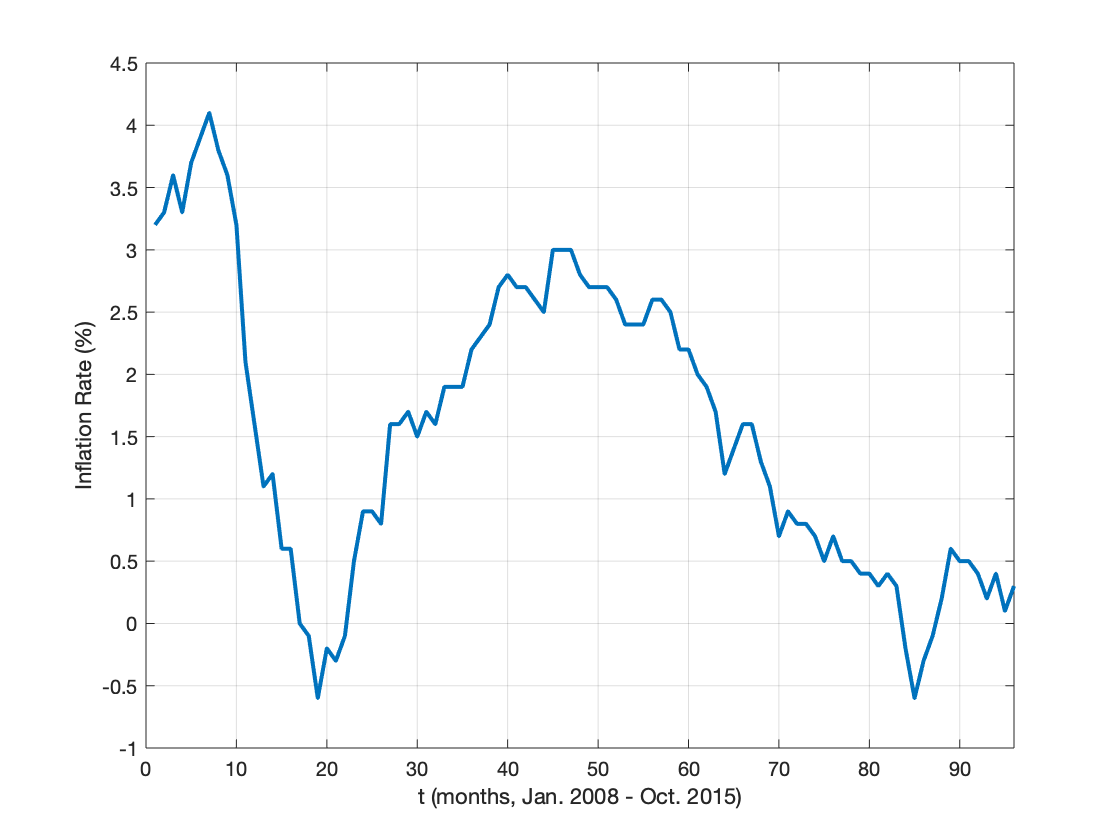}}
\caption[]{{\small Evolution of the European inflation rate from January 2008 to October 2015.}}
\label{figinfl}
\end{figure}

Figure~\ref{figrates} plots the ECB interest rate together 
with the Euribor interest rate between 2008 and 2015, while figure 
\ref{figinfl} plots inflation in the same period. 
The primary goal of the ECB is to maintain price stability,
i.e., to keep inflation within a desired range (close to 2\%). The
inflation target is achieved through periodic adjustments of the ECB
official interest rate and, consequently, of the short-term
interest rate. The goal of this paper, and specifically of 
this section, is to formulate a dynamical model  
that describes the interactions among inflation, 
the ECB and the short-term interest rates. 

From now on, we
fix the probability space $\left(\Omega ,\mathcal{F},\mathbb{P}\right
)$, 
where $\mathbb{P}$ is a martingale measure. 


\subsection{European Inflation}\label{subsecInflation}
European inflation data are officially made known
once a month, hence we model the inflation rate as a stochastic process that jumps
at fixed times, with jump sizes depending on its previous value 
and on the official ECB interest
rate, and is constant between two jumps. \\
Specifically, with the usual convention that one year is an
interval of length one, let $\mathcal{T}:=\{t_i\}_{i\geq 0,\ldots 
,M}$ 
be the sequence of times at which the values of
the inflation rate process $\left\{\Pi (t_i)\right\}_{t_i\in\mathcal{
T}}$ are
observed, where $t_0=0$, $t_1=\frac 1{12}$ and, for $i\geq 2$, $t_
i=it_1$.\\
The evolution is then given by
\begin{equation}\label{dynamicsInfl}\left\{\begin{array}{ll}
\Pi (0)=\Pi_0,\\
\Pi (t)=\Pi (t_i),&\quad t_i\leq t<t_{i+1},\\
\Pi (t_{i+1})=\gamma\left(\Pi (t_i),R(t_{i+1}^{-})\right)+\epsilon_{
i+1},&\quad t=t_{i+1},\end{array}
\right.\end{equation}
where $\gamma$ is a linear function defined by
\begin{equation}\gamma (\pi ,r)=\alpha\pi +k^{\Pi}(\pi^{
*}-\pi )+\beta\,r=(\alpha -k^{\Pi})\pi +k^{\Pi}\pi^{*}+\beta\,r,\label{gamma}\end{equation}
with $\alpha ,~\beta\in\mathbb{R}$ and $k^{\Pi},~\pi^{*}\in\mathbb{
R}_{+}$ constant parameters such that $0<\alpha -k^{\Pi}<1$.\\
The fluctuations $\{\epsilon_i\}_{i=1,\ldots ,M}$ are i.i.d. random variables
distributed according to the $\mathcal{N}(0,v^2)$ law, and
$R(t)$, for $t\in [0,T]$, is the
interest rate process which will be described in
Section~\ref{subsecECB}.\\
We can see that 
\begin{equation}\label{g}\gamma (\pi ,r)=\pi +\left(k^{\Pi}-\alpha 
+1\right)\left[\frac {k^{\Pi}\pi^{*}+\beta\,r}{k^{\Pi}-\alpha +1}
-\pi )\right]\end{equation}
and hence the condition $0<\alpha -k^{\Pi}<1$ yields that the process
$\Pi$  satisfies the mean-reversion property towards 
$\frac {k^{\Pi}\pi^{*}+\beta\,r}{k^{\Pi}-\alpha +1}$.


\subsection{European Central Bank Interest Rate}\label{subsecECB}

Looking at figure \ref{figrates},
we can recognise some important facts about the ECB interest rate.
The level of the rate is persistent, hence 
the sample path is a step function; The changes 
are multiples of 25 basis points (bp); A change is 
often followed by additional changes, frequently in the 
same direction. 
Therefore we model the ECB 
interest rate, $R(t)$, as a continuous time, 
pure jump process with finitely many possible upward and downward jump values. 
These jumps occur at random times $\{\vartheta_i\}$, and their size 
is equal to $k\delta$ with $\delta =0.0025$ and $k\in \{-m,...,-1
,1,...,m\}$.
The jump intensity, $\lambda$, is a function of the current values of 
inflation and the ECB interest rate (when the level of
the official interest rate is low, there is a tendency to avoid
further downward jumps). The probability of occurence of 
a jump $k\delta$ also depends upon the current values of inflation and the ECB 
interest rate, i.e. it is a function $p(\pi ,r,k\delta )$. 

Since, by definition, an interest rate is always larger than -1, we can assume,without loss 
of generality, $R(t)>\underline r,\quad\underline r\geq -1$ , for all $
t>0$.
In addition, we suppose that there exists a maximum value $0<\overline 
r<+\infty$ such that 
$R(t)<\overline r,$ for all $t>0.$
Consistently we assume that 
\begin{equation}\label{roverandunderline}p(\pi ,r,k\delta )=0\qquad\mbox{\rm for }
r+k\delta\notin (\underline r,\overline r),\end{equation}
and that 
\begin{equation}\label{overlineLambda}\overline {\lambda}:=\sup_{
(\pi ,r)\in\mathbb{R}\times (\underline r,\overline r)}\lambda (\pi 
,r)<+\infty ,\end{equation}
Of course we suppose that
\[\sum_{k\in \{-m,...,-1,1,...,m\}}p(\pi ,r,k\delta )=1\qquad\forall
\pi\in\mathbb{R},\,r\in [\underline r,\overline r]\]
Moreover in view of Section \ref{sectionVE}, we will make the following additional 
assumptions on 
$p(\cdot ,\cdot ,k\delta )$ and $\lambda =\lambda (\cdot ,\cdot )$: For $
k=-m,...,-1,1,...,m$, 
\begin{equation}\label{continuitapelambda}p(\cdot ,\cdot ,k\delta 
)\mbox{\rm \ and }\lambda =\lambda (\cdot ,\cdot )\mbox{\rm \ are continuous on }\mathbb{
R}\times [\underline r,\overline r]\end{equation}
and
\begin{equation}\label{derivatefinitepelambda}\begin{array}{c}
p(\pi ,\cdot ,k\delta )\mbox{\rm \ has a finite left derivative at }\overline 
r-k\delta\\
p(\pi ,\cdot ,-k\delta )\mbox{\rm \ has a finite right derivative at }\underline 
r+k\delta\end{array}
\end{equation}
An equivalent way of describing this jump process is to 
think that the process jumps with constant intensity $\overline {
\lambda}$, 
but the jump can be zero with probability 

\begin{equation}q(\pi ,r,0):=1-\frac {\lambda (\pi ,r)}{\overline {
\lambda}},\label{0jumpprob}\end{equation}
or can be $k\delta$ with probability 
\begin{equation}q(\pi ,r,k\delta ):=\frac {p(\pi ,r,k\delta )\lambda 
(\pi ,r)}{\overline {\lambda}},\qquad\quad k=-m,...,-1,1,...,m.\label{probabilities}\end{equation}
Then we can consider the ECB interest rate as the solution 
of the following stochastic equation 
\begin{equation}\label{dynamicsECB}R(t)=R_0+\int_0^tJ\left(\Pi (s^{
-}),R(s^{-}),U_{N(s^{-})+1}\right)~dN(s),\end{equation}
where $N=N(t)$ is a Poisson process  with intensity 
$\overline {\lambda}$, $\{U_n\}_{n\geq 0}$ are i.i.d. $[0,1]$-uniform random
variables, independent of $N$, and 
\begin{equation}\label{J}\begin{array}{ccc}
J(\pi ,r,u)&:=&-m\delta\mathbf{1}_{(0,1]}(q(\pi ,r,-m\delta ))\mathbf{
1}_{[0,q(\pi ,r,-m\delta )]}(u)\\
&&+\sum_{k=-m+1}^mk\delta~\mathbf{1}_{(0,1]}(q(\pi ,r,k\delta ))\mathbf{
1}_{(\sum_{h=-m}^{k-1}q(\pi ,r,h\delta ),\sum_{h=-m}^kq(\pi ,r,h\delta 
)]}(u),\end{array}
u\in [0,1],\end{equation}
the probabilities $q$ being defined by $(\ref{0jumpprob})$, 
$(\ref{probabilities})$. 


\subsection{Short-term Interest Rate}\label{subsecShort} 

As in many interest rate models in the literature, we 
suppose that  
the evolution of the short-term interest rate, $R^{sh}$, 
is a mean-reverting Ito process with coefficients 
depending on the ECB interest rate, $R$, and hence  
indirectly on the inflation $\Pi$ as well. 
More precisely, we suppose that $R^{sh}$ satisfies 
the following equation
\begin{equation}\label{dynamicsShort}\left\{\begin{array}{c}
dR^{sh}(t)=k^{sh}\left(b(R(t))-R^{sh}(t)\right)dt+\overline {\sigma}\left
(|R(t)-R^{sh}(t)|^2\right)\sqrt {|R^{sh}(t)|}dW(t),\\
R^{sh}(0)=R^{sh}_0,\end{array}
\right.\end{equation}
where $k^{sh}\in\mathbb{R}_{+}$ is a constant parameter and
$\{W_t\}_{t\in [0,T]}$ is a standard Wiener process. The function
$b(r)$ is defined by
\begin{equation}\label{b}b(r)=b_0+b_1r,\end{equation}
where $b_0,b_1\in\mathbb{R}$ are constant parameters and 
$\inf_{(\underline r,\overline r)}b(r)>0.$
The volatility coefficient $\overline {\sigma}$ is allowed to depend on the spread 
between $R^{sh}$ and $R$, so as to model, for instance, 
the fact that higher values of the spread 
may lead to higher volatility of $R^{sh}$.
In view 
of Section \ref{sectionVE}, the square of $\overline {\sigma}$, $\overline {
\sigma}^2$, satisfies the 
following assumptions:
\begin{equation}\label{sigmabarproperty1}\overline {\sigma}^2\in 
{\cal C}^2([0,\infty )),\end{equation}
\begin{equation}\label{sigmabarproperty2}0<\sigma_0^2\leq\overline {
\sigma}^2(q)\leq\sigma_1(1+\sqrt {q})\\
,\,\,\,q\in [0,+\infty )\\
.\end{equation}
For instance one can take 
\[\overline {\sigma}(q)=(1+q)^{1/4}.\]
We can ssume, without loss of generality, that $\overline {\sigma}(q)>0$ for all $q$.

Finally, in analogy to the CIR model, we assume that
\begin{equation}\label{bandsigmaproperty}k^{sh}\inf_{(\underline 
r,\overline r)}b(r)>\frac 12\overline {\sigma}^2\left(\overline 
r-(\underline r\wedge 0)\right).\end{equation}


\subsection{Well-posedness of the model}\label{subsecwellpos}
We suppose that the sources of randomness in (\ref{dynamicsInfl}),
(\ref{dynamicsECB}) and (\ref{dynamicsShort}), that is
$\{\epsilon_i\}_{1\leq i\leq M}$, $\{N(t)\}_{t\geq 0}$, $\{U_n\}_{
n\geq 1}$ and $\{W(t)\}_{t\geq 0}$, are mutually
independent. All information is given by the following filtration
\begin{equation}\label{filtrationbig}\mathcal{F}_t:=\sigma\left(\left
\{\Pi_0,R_0,R^{sh}_0,\epsilon_{I(s)},N(s),W(s),U_{N(s)},s\leq t\right
\}\right),\end{equation}
where $I(s)$ is the number of jumps of inflation up to time $s$,
namely,
\begin{equation}\label{I}I(s):=\max\{i\geq 0:t_i\leq s\},s\geq 0.\end{equation}
The model introduced in Sections
\ref{subsecInflation}-\ref{subsecShort} is well posed, in the sense
that there exists one and only one stochastic process
$\left(\Pi ,R,R^{sh}\right)$ verifying
(\ref{dynamicsInfl}), (\ref{dynamicsECB}) and (\ref{dynamicsShort}), 
as stated precisely in the following theorem, which is 
proven in Appendix A. 

\begin{teo}\label{theowellpos}
For every triple of $\mathbb{R}\times (\underline r,\overline r)\times 
(0,+\infty )$-valued r.v.'s $\left(\Pi_0,R_0,R^{sh}_0\right)$, 
there exists one and only one stochastic process
    $(\Pi ,R,R^{sh})$ defined on $(\Omega ,\mathcal{F},\mathbb{P}
)$, $\{\mathcal{F}_t\}$-adapted, 
such that $(\ref{dynamicsInfl})$, $(\ref{dynamicsECB})$ and 
$(\ref{dynamicsShort})$ 
are $\mathbb{P}$-a.s. verified. 
It holds $R^{sh}(t)>0$ for all $t\geq 0$, almost surely.
\end{teo}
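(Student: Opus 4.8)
The plan is to construct $(\Pi,R,R^{sh})$ by a recursion over an (a.s.\ locally finite) set of jump epochs, exploiting the \emph{triangular} structure of the system: the pair $(\Pi,R)$ is autonomous — its evolution, governed by (\ref{dynamicsInfl}) and (\ref{dynamicsECB})--(\ref{J}), never involves $R^{sh}$ — and is piecewise constant, so that the only genuinely analytic ingredient is the scalar, Cox--Ingersoll--Ross--type equation (\ref{dynamicsShort}) for $R^{sh}$ between consecutive jumps of $R$. Let $0=\tau_0<\tau_1<\tau_2<\dots$ be the jump times of $N$; since $\mathcal{T}=\{t_i\}_{i=0}^M$ is finite and $\tau_n\to\infty$ a.s., the superposition $0=s_0<s_1<\dots$ of $\{\tau_n\}$ with $\mathcal{T}$ is a.s.\ locally finite and exhausts $[0,\infty)$.

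\textbf{The pure-jump component.} Put $\Pi(t)\equiv\Pi_0$, $R(t)\equiv R_0$ on $[0,s_1)$, and assume $(\Pi,R)$ built on $[0,s_n)$ with $R(s_n^-)\in(\underline r,\overline r)$. At $s_n$: if $s_n=t_i$, set $\Pi(s_n)=\gamma\big(\Pi(s_n^-),R(s_n^-)\big)+\epsilon_i$ as in (\ref{dynamicsInfl})--(\ref{gamma}) and keep $R$ continuous; if $s_n=\tau_k$, set $R(s_n)=R(s_n^-)+J\big(\Pi(s_n^-),R(s_n^-),U_k\big)$ as in (\ref{dynamicsECB})--(\ref{J}), with $U_k=U_{N(s_n^-)+1}$, and keep $\Pi$ continuous (the event $s_n\in\mathcal{T}\cap\{\tau_k\}$ has probability $0$ and is handled by doing both updates with left-limit data). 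Assumption (\ref{roverandunderline}), together with $\sum_k p(\pi,r,k\delta)=1$, keeps $R(s_n)\in(\underline r,\overline r)$, so the recursion yields on $[0,\infty)$ a piecewise-constant, $\{\mathcal{F}_t\}$-adapted pair $(\Pi,R)$ with $R(t)\in(\underline r,\overline r)$ for all $t$, solving (\ref{dynamicsInfl}) and (\ref{dynamicsECB}); adaptedness and the precise indexing $U_k=U_{N(s_n^-)+1}$ are exactly what the filtration (\ref{filtrationbig})--(\ref{I}) is built to carry. This $(\Pi,R)$ is unique among adapted solutions, since every update is a deterministic function of the left-limit state and of one $\epsilon_i$ or $U_k$.

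\textbf{The diffusive component.} On $[s_n,s_{n+1})$ write $r_n:=R(s_n)\in(\underline r,\overline r)$; there $R^{sh}$ must solve
\[
dX(t)=k^{sh}\big(b(r_n)-X(t)\big)\,dt+\overline{\sigma}\big(|r_n-X(t)|^2\big)\sqrt{|X(t)|}\,dW(t),\qquad X(s_n)=R^{sh}(s_n^-),
\]
driven by the post-$s_n$ increments of $W$, a Brownian motion independent of $\mathcal{F}_{s_n}$. I would show this equation has a pathwise-unique, non-exploding strong solution that is a.s.\ strictly positive when $X(s_n)>0$. Weak existence and absence of explosion come from continuity and at most linear growth of the coefficients: the drift is affine, and $\overline{\sigma}^2(|r_n-x|^2)|x|\le\sigma_1(1+|r_n-x|)|x|$ is at most quadratic by (\ref{sigmabarproperty2}). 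Pathwise uniqueness follows from the Yamada--Watanabe criterion, localized at the exit times from $[-K,K]$: the drift is Lipschitz, $x\mapsto\overline{\sigma}(|r_n-x|^2)$ is Lipschitz on bounded sets since $\overline{\sigma}^2\in C^2$ and $\overline{\sigma}>0$ by (\ref{sigmabarproperty1})--(\ref{sigmabarproperty2}), and $x\mapsto\sqrt{|x|}$ is $\tfrac12$-Hölder, so their product is $\tfrac12$-Hölder on bounded sets; no explosion makes the localized uniqueness global. By the Yamada--Watanabe theorem, weak existence plus pathwise uniqueness give a strong solution, unique and a functional of $W$, hence living on $(\Omega,\mathcal{F},\mathbb{P})$. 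Strict positivity is where (\ref{bandsigmaproperty}) enters: for $x>0$ near $0$ one has $b(r_n)\ge\inf_{(\underline r,\overline r)}b(r)>0$, while $\overline{\sigma}^2(|r_n-x|^2)$ is bounded above by a constant which, using $r_n\in(\underline r,\overline r)$, is controlled by the right-hand side of (\ref{bandsigmaproperty}); the resulting Feller-type inequality makes the scale function of $X$ non-integrable at $0$ (equivalently, $X$ is dominated below near $0$ by a CIR process satisfying the Feller condition), so the hitting time of $0$ is a.s.\ infinite.

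\textbf{Gluing and conclusion.} Since $R^{sh}$ does not jump at any $s_n$, feeding $X(s_n^-)=R^{sh}(s_n^-)>0$ (with $R^{sh}(s_0^-):=R^{sh}_0>0$) into the previous step successively on $[s_0,s_1),[s_1,s_2),\dots$ produces a continuous, $\{\mathcal{F}_t\}$-adapted process $R^{sh}$ on $[0,\infty)$ solving (\ref{dynamicsShort}) with $R^{sh}(t)>0$ for all $t$ a.s.; strict positivity on each closed subinterval is what guarantees $R^{sh}(s_n^-)>0$, so the recursion never stalls. This gives existence. For uniqueness, any adapted triple solving the three equations has its $(\Pi,R)$-part equal to the one above, hence its $R^{sh}$-part solving on each $[s_n,s_{n+1})$ the same scalar SDE with the same initial value; by the pathwise uniqueness of the previous step it agrees with the constructed $R^{sh}$, first on $[0,s_1)$ and then, by induction, on all of $[0,\infty)$ (so, a posteriori, that solution is strictly positive too). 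I expect the positivity step — and the bookkeeping that keeps the CIR comparison consistent across the countably many intervals so that $R^{sh}(s_n^-)>0$ for every $n$ — to be the only real obstacle; everything else is the routine ``piecewise-deterministic core with a diffusion fed in'' construction.
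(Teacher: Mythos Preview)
Your construction is essentially the paper's: both proofs recurse over the combined set of inflation dates $\{t_i\}$ and Poisson jump times $\{\vartheta_n\}$, build the piecewise-constant pair $(\Pi,R)$ explicitly, and on each subinterval invoke the Yamada--Watanabe/Ikeda--Watanabe $\tfrac12$-H\"older criterion for the scalar CIR-type SDE (the paper cites the Corollary to Theorem~3.2, Chapter~4, of Ikeda--Watanabe, which is exactly your localized Yamada--Watanabe argument). The one genuine difference is the positivity step. You argue via Feller boundary classification (non-integrability of the scale function at $0$, or comparison with a CIR process meeting the Feller condition); the paper instead uses the Lyapunov function $V_1(z)=z^2-\ln z$, shows $L^{\pi}V_1(r,z)\le C(1+z^2)$ using (\ref{bandsigmaproperty}) to control the $\tfrac1z$ term, and then Gronwall gives $\ln(n)\,\P(\alpha_n\le t)\le C_t$ for $\alpha_n=\inf\{t:R^{sh}(t)\le 1/n\}$. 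The Lyapunov route is a bit more robust here, since it avoids tracking the precise asymptotics of $\overline\sigma^2(|r-z|^2)$ as $z\to0$ and sidesteps the issue that a direct CIR comparison is awkward when the diffusion coefficient is not of the form $\sigma'\sqrt z$ with constant $\sigma'$; your scale-function variant is fine, but the parenthetical CIR-domination claim would need a non-standard comparison theorem to be made rigorous.
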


\begin{proof}
See Appendix B.
\end{proof}


\section{The Valuation Equation}\label{sectionVE}

\setcounter{equation}{0}

In this section, we derive the valuation equation for 
a contingent claim with maturity $T$ and
payoff 
\begin{equation}Y(T)^p\,\Phi (\Pi (T),R(T),R^{sh}(T)),\quad p\geq 
0,\label{payoff}\end{equation}
where $Y$ is the inflation index, i.e. 
\begin{equation}Y(t)=\exp\bigg(\int_0^t\Pi (s)ds\bigg).\label{index}\end{equation}
It is well known that, under a risk neutral measure, 
the price $P(t)$ of such a contingent claim can
be expressed as the expected discounted payoff, namely, 
\[P(s)=\E\bigg[\exp\bigg(-\int_s^TR^{sh}(u)du\bigg)Y(T)^{\,p}\Phi 
(\Pi (T),R(T),R^{sh}(T))\bigg|{\cal F}_s\bigg],\]
where ${\cal F}_s$ is given by $(\ref{filtrationbig})$. 

\begin{rem}\label{discount}
Note that the 
form $(\ref{payoff})$-$(\ref{index})$ allows to consider a real discount 
factor, i.e. a discount factor that takes into account inflation. In fact 
\begin{eqnarray*}
&&\E\bigg[\exp\bigg(-\int_s^T\big(R^{sh}(u)-\Pi (u)\big)du\bigg)\Phi 
(\Pi (T),R(T),R^{sh}(T))\bigg|{\cal F}_s\bigg]\\
&=&\frac 1{Y(s)}\E\bigg[\exp\bigg(-\int_s^TR^{sh}(u)du\bigg)Y(T)\Phi 
(\Pi (T),R(T),R^{sh}(T))\bigg|{\cal F}_s\bigg].\end{eqnarray*}
\end{rem}
\vskip.3in

Due to the Markov property of $(Y,\Pi ,R,R^{sh})$, 
\[P(s)=\varphi (s,Y(s),\Pi (s),R(s),R^{sh}(s)),\]
where 
\begin{eqnarray}
&&\!\!\!\!\!\!\!\!\varphi (s,y,\pi ,r,z)=\nonumber\\
&&\!\!\!\!\!\!\!\!\!\!\!\!\E\bigg[\exp\bigg(-\int_s^TR^{sh}(u)du\bigg
)Y(T)^p\,\Phi (\Pi (T),R(T),R^{sh}(T))\bigg|(Y(s),\Pi (s),R(s),R^{
sh}(s))=(y,\pi ,r,z)\bigg].\qquad\label{price}\end{eqnarray}

We assume that $\Phi$ is continuous and satisfies 
\begin{equation}|\Phi (\pi ,r,z)|\leq C_0e^{C_1|\pi |}(1+z),\quad
\pi\in\R,\,r\in(\underline r,\overline r),\,\,\,z>0,\label{payoffgr}\end{equation}
for some constants $C_0,\,C_1\geq 0$. 
Supposing that $t_M\leq T<t_{M+1}$, we are going to show 
(Proposition \ref{recursion}) that 
\[\varphi (s,y,\pi ,r,z)=\left\{\begin{array}{ll}
y^p\,e^{p(T-s)\pi}\,\varphi^M(s-t_M,\pi ,r,z),&t_M\leq s\leq T,\\
y^p\,e^{p(t_{i+1}-s)\pi}\varphi^i(s-t_i,\pi ,r,z),&t_i\leq s<t_{i
+1},~i=0,\ldots ,M-1.\end{array}
\right.\]
where (Theorem \ref{sol}) for each inflation value, $\pi ,$ $\varphi^
i(\cdot ,\pi ,\cdot ,\cdot )$ is the unique 
solution of a terminal value problem for an equation that can be viewed as a simple 
parabolic Partial 
Integro-Differential Equation (with coefficients 
depending on the parameter $\pi$). The equation is the same 
for all $i$'s, but the terminal value  
is different for each $i$: it is $\Phi$ for $\varphi^M$ and it is defined recursively 
from $\varphi^{i+1}(0,\cdot ,\cdot ,\cdot )$, for $\varphi^i$, $i
=0,...,M-1$. 
As shown in Appendix C, this sequence of 
parametric terminal value problems can be solved 
numerically in an efficient way. 

To carry out our program, we introduce the 
parametrized process $(R^{\pi},R^{sh,\pi})$, obtained by ''freezing'' 
the inflation rate.  

\begin{pro}\label{param-wdef}

Let $(N,\{U_n\},W)$  be as in Theorem \ref{theowellpos}. 
For each $\pi\in\R$, for each $(\underline r,\overline r)\times (
0,\infty )$-valued random 
variable $(R^{\pi}_0,R^{sh,\pi}_0)$, independent of $(N,\{U_n\},W
)$, there exists 
a unique strong solution of the system of equations 
\begin{equation}\left\{\begin{array}{l}
R^{\pi}(t)=R^{\pi}_0+\int_0^tJ(\pi ,R^{\pi}(s^{-}),U_{N(s^{-})+1}
)dN(s),\\
R^{sh,\pi}(t)=R^{sh,\pi}_0+\int_0^tk^{sh}\left(b(R^{\pi}(s))-R^{s
h,\pi}(s)\right)ds\\
\qquad\qquad\qquad\quad +\int_0^t\overline {\sigma}\left(|R^{\pi}(s)-R^{
sh,\pi}(s)|^2\right)\sqrt {|R^{sh,\pi}(s)|}dW(s).\end{array}
\right.\label{param}\end{equation}

In addition, it holds $R^{\pi}(t)\in (\underline r,\overline r)$, $
R^{sh,\pi}(t)>0$ for all $t\geq 0.$ 

The solution corresponding to 
$(R^{\pi}_0,R^{sh,\pi}_0)=(r,z)$ will be denoted by $(R^{\pi}_r,R^{
sh,\pi}_{(r,z)})$. 

\end{pro}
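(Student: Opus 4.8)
The plan is to decouple the system (\ref{param}) and treat the two equations in succession, since the first equation for $R^\pi$ does not involve $R^{sh,\pi}$. First I would handle the pure-jump equation $R^\pi(t)=R^\pi_0+\int_0^t J(\pi,R^\pi(s^-),U_{N(s^-)+1})\,dN(s)$. Because $N$ is a Poisson process with finite intensity $\overline\lambda$, its jump times $0<\tau_1<\tau_2<\cdots$ form a discrete set with no accumulation point, and between consecutive jump times the equation says $R^\pi$ is constant. Hence the solution is built by a trivial induction on the jump times: on $[0,\tau_1)$ set $R^\pi\equiv R^\pi_0$, and at $\tau_n$ put $R^\pi(\tau_n)=R^\pi(\tau_n^-)+J(\pi,R^\pi(\tau_n^-),U_n)$; this visibly gives the unique adapted càdlàg solution and it is defined for all $t\ge 0$ since the $\tau_n\to\infty$ a.s. For the invariance $R^\pi(t)\in(\underline r,\overline r)$, I would argue by induction on $n$: if $R^\pi(\tau_n^-)=r\in(\underline r,\overline r)$, then by (\ref{roverandunderline}) $q(\pi,r,k\delta)=0$ whenever $r+k\delta\notin(\underline r,\overline r)$, so by the definition (\ref{J}) of $J$ the indicator $\mathbf 1_{(0,1]}(q(\pi,r,k\delta))$ kills exactly those $k$ that would push the value outside $(\underline r,\overline r)$; since $R^\pi_0\in(\underline r,\overline r)$ by hypothesis, the claim follows.

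Next, with the path of $R^\pi$ now a fixed (random) càdlàg step function, the second equation becomes a scalar SDE for $R^{sh,\pi}$ driven only by $W$, with time-inhomogeneous (through $R^\pi(s)$) but otherwise CIR-type coefficients:
\[
dR^{sh,\pi}(t)=k^{sh}\big(b(R^\pi(t))-R^{sh,\pi}(t)\big)\,dt+\overline\sigma\big(|R^\pi(t)-R^{sh,\pi}(t)|^2\big)\sqrt{|R^{sh,\pi}(t)|}\,dW(t).
\]
The drift is globally Lipschitz in the state variable $R^{sh,\pi}$ (it is affine in it). The diffusion coefficient is locally Lipschitz away from $0$ (by (\ref{sigmabarproperty1}), $\overline\sigma^2\in\mathcal C^2$, and the square-root singularity of $\sqrt{|z|}$ is only at $z=0$), and it satisfies a linear growth bound by (\ref{sigmabarproperty2}) together with $\sqrt{|z|}\le 1+|z|$. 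On each deterministic stretch between two jumps of $R^\pi$ the coefficients are continuous in $t$, and since $R^\pi$ has only finitely many jumps on any bounded interval I can concatenate the solutions across the jump times (the jumps of $R^\pi$ only change the drift coefficient, not the state, so the solution stays continuous). Standard Yamada–Watanabe / CIR-type arguments then give pathwise uniqueness and existence of a unique strong solution up to a possible explosion time; linear growth rules out explosion, so the solution is global.

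The only substantive point — and the one I expect to be the main obstacle — is the strict positivity $R^{sh,\pi}(t)>0$ for all $t\ge 0$, which is exactly where the Feller-type condition (\ref{bandsigmaproperty}) enters. The idea is a comparison/Feller-boundary argument: as long as $R^{sh,\pi}$ stays in $(0,\infty)$ one has, using $R^\pi(s)\in(\underline r,\overline r)$, the bound $|R^\pi(s)-R^{sh,\pi}(s)|^2\le \overline r-(\underline r\wedge 0)$ on the relevant range near the boundary, hence $\overline\sigma^2(|R^\pi(s)-R^{sh,\pi}(s)|^2)\le \overline\sigma^2(\overline r-(\underline r\wedge 0))$, while the drift near $0$ is at least $k^{sh}\inf_{(\underline r,\overline r)}b(r)>0$; assumption (\ref{bandsigmaproperty}) says precisely that this minimal drift strictly dominates half the maximal diffusion coefficient at the boundary. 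One can then dominate $R^{sh,\pi}$ from below by a genuine CIR process $X$ with parameters $(k^{sh},\,\inf b,\,\overline\sigma^2(\overline r-(\underline r\wedge 0)))$ for which (\ref{bandsigmaproperty}) is the classical Feller condition guaranteeing $X(t)>0$ a.s.; a comparison theorem for one-dimensional SDEs (e.g. Ikeda–Watanabe), applied locally at the first hitting time of any small level $\varepsilon>0$, then forces $R^{sh,\pi}$ to stay strictly positive. Care is needed because the comparison must be run on the stochastic intervals between jumps of $R^\pi$ and then pieced together, and because the dominating CIR process must be chosen so that its coefficients lie below those of $R^{sh,\pi}$ precisely on the region $\{R^{sh,\pi}\le\varepsilon\}$; making this localization rigorous is the delicate part, but it is routine once set up. Finally, the notational convention $(R^\pi_r,R^{sh,\pi}_{(r,z)})$ for the solution started at $(r,z)$ requires nothing further.
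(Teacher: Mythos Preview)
Your decoupling strategy --- solve the pure-jump equation for $R^\pi$ first by induction on the Poisson jump times, then solve the CIR-type SDE for $R^{sh,\pi}$ piecewise between those jump times via a Yamada--Watanabe argument, concatenating across jumps --- is exactly what the paper does. The paper simply says the proof is ``analogous to the proof of Theorem \ref{theowellpos}'', and that proof follows precisely this structure, invoking the Corollary to Theorem 3.2, Chapter IV of Ikeda--Watanabe for the SDE on each inter-jump interval.

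The one substantive difference is your positivity argument. You propose to dominate $R^{sh,\pi}$ from below by a CIR process with the \emph{maximal} volatility $\overline\sigma^2(\overline r-(\underline r\wedge 0))$ and minimal mean-reversion level $\inf b$, and then appeal to a comparison theorem. The difficulty is that the standard Ikeda--Watanabe comparison theorem requires the two processes to share the \emph{same} diffusion coefficient; with different diffusion coefficients the pathwise ordering can fail, and there is no reason a priori that a larger volatility produces a smaller process (larger noise pushes in both directions). Your localization to $\{R^{sh,\pi}\le\varepsilon\}$ does not get around this, so calling the step ``routine'' is optimistic. The paper avoids this issue entirely by using a Lyapunov function: with $V_1(z)=z^2-\ln z$, one checks directly from (\ref{bandsigmaproperty}) and (\ref{sigmabarproperty2}) that the generator applied to $V_1$ is bounded by $C(1+V_1)$; It\^o's formula plus Gronwall then give $\E[V_1(R^{sh,\pi}(t\wedge\alpha_n))]\le (V_1(z_0)+Ct)e^{Ct}$, and since $V_1(1/n)\ge\ln n$ this forces $\P(\alpha_n\le t)\to 0$. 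This is the cleaner route here and is what you should use.
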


\begin{proof}
The proof is analogous to the proof of Theorem 
\ref{theowellpos}.
\end{proof}

The following property of $(R^{\pi},R^{sh,\pi})$ will be used for 
the recursion. 

\begin{lemma}\label{param-prop}
For any continuous function 
$f:\R\times (\underline r,\overline r)\times (0,\infty )\rightarrow\R$  satisfying $
(\ref{payoffgr})$, \hfill\break 
for 
any $T>0$, 
$\E\bigg[e^{-\int_0^{T-t}R^{sh,\pi}_{(r,z)}(u)du}\big|f\big(\pi ,
R^{\pi}_r(T-t),R^{sh,\pi}_{(r,z)}(T-t)\big)\big|\bigg]$ is finite 
for every $(\pi ,r,z)$ and $0\leq t\leq T$, and the function 
\[F(t,\pi ,r,z)=\E\left[e^{-\int_0^{T-t}R^{sh,\pi}_{(r,z)}(u)du}f\left
(\pi ,R^{\pi}_r(T-t),R^{sh,\pi}_{(r,z)}(T-t)\right)\right]\]
is continuous on $[0,T]\times\R\times (\underline r,\overline r)\times 
(0,\infty )$ and satisfies 
$(\ref{payoffgr})$ uniformly for $t\in [0,T]$.
\end{lemma}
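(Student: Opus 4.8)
The plan is to prove the finiteness, continuity, and growth-control claims in three steps, all of which reduce to uniform bounds on the parametrized pair $(R^\pi, R^{sh,\pi})$. The single ingredient that makes everything work is that, by Proposition \ref{param-wdef}, $R^\pi_r(u)$ stays inside the bounded interval $(\underline r,\overline r)$ for all $u$ and all $r$; so the only unbounded quantities to control are $R^{sh,\pi}_{(r,z)}(u)$ and the integral $\int_0^{T-t}R^{sh,\pi}_{(r,z)}(u)\,du$. First I would establish moment bounds for $R^{sh,\pi}$. Starting from the SDE in \eqref{param}, applying It\^o's formula to $R^{sh,\pi}(u)$ (or to a suitable power), and using the linear-growth bound \eqref{sigmabarproperty2} on $\overline\sigma^2$ together with the affine form \eqref{b} of $b$ and the boundedness of $R^\pi$, one gets, by a standard Gronwall argument, $\E\big[\sup_{0\le u\le T} R^{sh,\pi}_{(r,z)}(u)\big]\le C_T(1+z)$, with $C_T$ independent of $(\pi,r)$ since all the coefficient bounds are uniform in those variables. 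Since $R^{sh,\pi}\ge 0$, the discount factor $e^{-\int_0^{T-t}R^{sh,\pi}_{(r,z)}(u)\,du}$ is bounded by $1$; combining this with \eqref{payoffgr} for $f$ and the bound on $R^\pi_r(T-t)$,
\[
\E\Big[e^{-\int_0^{T-t}R^{sh,\pi}_{(r,z)}(u)\,du}\big|f\big(\pi,R^{\pi}_r(T-t),R^{sh,\pi}_{(r,z)}(T-t)\big)\big|\Big]\le C_0 e^{C_1|\pi|}\,\E\big[1+R^{sh,\pi}_{(r,z)}(T-t)\big]\le C_0 e^{C_1|\pi|}\big(1+C_T(1+z)\big),
\]
which gives finiteness and, after absorbing constants, the growth bound \eqref{payoffgr} for $F$ uniformly in $t\in[0,T]$.

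Second I would prove continuity of $F$ on $[0,T]\times\R\times(\underline r,\overline r)\times(0,\infty)$. The natural route is a continuous-dependence argument for the solution of \eqref{param} on the initial data $(r,z)$, on the parameter $\pi$, and on the time horizon. For the diffusion $R^{sh,\pi}$ one cannot use Lipschitz theory directly because of the $\sqrt{|R^{sh,\pi}|}$ term, but the Yamada--Watanabe type estimates (already used implicitly to get pathwise uniqueness for \eqref{dynamicsShort}, hence available as in the proof of Theorem \ref{theowellpos}) yield $L^1$-continuity of $u\mapsto R^{sh,\pi}_{(r,z)}(u)$ in $(r,z,\pi)$; the coefficients $k^{sh}(b(R^\pi)-\cdot)$ and $\overline\sigma^2(|R^\pi-\cdot|^2)$ depend continuously (indeed smoothly in the second slot) on their arguments, and $\overline\sigma^2\in\mathcal C^2$ by \eqref{sigmabarproperty1}. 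For the pure-jump component $R^\pi_r$, I would use the representation \eqref{param} driven by the fixed Poisson process $N$ and uniforms $\{U_n\}$: between consecutive jump times the path is constant, so on the event of finitely many jumps in $[0,T]$ (probability one) the path is determined by finitely many evaluations of $J(\pi,\cdot,U_n)$; the jump sizes $J$ are locally constant in $u$ but the jump probabilities $q(\pi,r,k\delta)$ are continuous in $(\pi,r)$ by \eqref{continuitapelambda}, so $R^\pi_r(T-t)$ converges in probability (in fact the path converges off a null set where some $U_n$ lands exactly on a threshold). Feeding these convergences into the bounded-by-$1$ discount factor and into $f$, continuous by hypothesis, and invoking dominated convergence with the uniform integrable bound from Step one (upgraded to an $L^{1+\eta}$ bound, obtained by the same moment estimate applied to a slightly higher power, to secure uniform integrability), gives continuity of $F$.

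Third, the joint continuity in $t$ requires handling the varying horizon $T-t$; here I would write $R^{sh,\pi}_{(r,z)}(T-t)$ and $\int_0^{T-t}R^{sh,\pi}_{(r,z)}(u)\,du$ and note that both are a.s.\ continuous in $t$ by continuity of sample paths of the integrand (the jump part has left and right limits, and the weight $e^{-\int}$ is continuous), again dominated by the integrable envelope, so that $t\mapsto F(t,\pi,r,z)$ is continuous and, by a routine uniformity check in the Gronwall constants, the convergences are locally uniform, yielding joint continuity. The main obstacle I expect is the continuous dependence of the degenerate square-root diffusion $R^{sh,\pi}$ on the parameter $\pi$ \emph{through} the jump process $R^\pi$: one must combine the Yamada--Watanabe modulus estimate with the jump-path perturbation simultaneously, i.e.\ show that a small change in $\pi$ produces (off a null set) at most a small change in the whole trajectory of $R^\pi$ and hence, via the coefficients, a controlled change in $R^{sh,\pi}$; this is the step where the smoothness assumption \eqref{sigmabarproperty1} and the boundedness of $R^\pi$ are genuinely used, and it is the only place the argument is more than bookkeeping.
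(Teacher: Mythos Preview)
Your moment-bound and growth-bound steps coincide with the paper's: both apply It\^o's formula to $R^{sh,\pi}$ (and powers thereof), use the linear growth \eqref{sigmabarproperty2} and the boundedness of $R^\pi$, and close with Gronwall to obtain $\E[R^{sh,\pi}_{(r,z)}(t)]\le C_T(1+z)$ and higher-moment analogues; from there the finiteness of the expectation and the uniform bound \eqref{payoffgr} for $F$ follow exactly as you write.

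The continuity argument, however, takes a genuinely different route. You propose a \emph{pathwise} stability analysis: Yamada--Watanabe type $L^1$ estimates for $R^{sh,\pi}$ with respect to initial data and the parameter $\pi$, combined with almost-sure convergence of the jump paths $R^{\pi_n}_{r_n}\to R^\pi_r$ via the explicit representation \eqref{J}. The paper instead argues by \emph{weak convergence}: it shows that the family $\{(R^{\pi_n}_{r_n},R^{sh,\pi_n}_{(r_n,z_n)})\}$ is tight (using the $q=4$ moment bound and the Aldous--Rebolledo criteria in Ethier--Kurtz), identifies every limit point as a solution of \eqref{param} with initial condition $(r,z)$ via the Kurtz--Protter stability theorem for stochastic integrals, and concludes by pathwise uniqueness that the whole sequence converges in law; uniform integrability (again from the higher moments) then gives convergence of the expectations defining $F$. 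Your approach is more elementary in spirit and avoids the tightness machinery, but the step you flag as the ``main obstacle'' is real: a quantitative Yamada--Watanabe stability estimate for a square-root diffusion whose drift and diffusion coefficients are themselves perturbed through a c\`adl\`ag driving process is not something that comes for free from the pathwise-uniqueness proof in Theorem~\ref{theowellpos} (that proof invokes the Ikeda--Watanabe existence/uniqueness corollary, not a modulus-of-continuity estimate). The paper's weak-convergence route sidesteps exactly this difficulty by reducing everything to uniqueness in law, which is already in hand.
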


\begin{proof} See Appendix B. \end{proof}

For a continuous function $f:\R\times (\underline r,\overline r)\times 
(0,\infty )\rightarrow\R$  such that 
$|f(\pi ,r,z)|\leq C_0e^{C_1|\pi |}(1+z)$, let $\epsilon$ be a Gaussian variable with mean zero and 
variance $v^2$ and set 
\begin{equation}Bf(\pi ,r,z)=\E[f\left(\gamma (\pi ,r)+\epsilon ,
r,z\right)]=\frac 1{\sqrt {2\pi}v}\int_{\mathbb{
R}}f\left(\gamma (\pi ,r)+u,r,z\right)\exp\left(-\frac {u^2}{2v^
2}\right)du,\label{B}\end{equation}
where $\gamma$ is defined in $(\ref{gamma})$. 

Note that 
\[\E[e^{C_1|\gamma (\pi ,r)+\epsilon |}]\leq e^{C_1|\beta r+k^{\Pi}
\pi^{*}|}\,e^{C_1(\alpha -k^{\Pi})|\pi |}\E[e^{C_1|\epsilon |}]<\infty 
,\]
so that $Bf$ is continuous, by dominated convergence, and verifies the growth condition $(\ref{payoffgr})$, i.e. 
\begin{equation}|Bf(\pi ,r,z)|\leq C_0'e^{C_1'|\pi |}(1+z).\label{B-growth}\end{equation}

\begin{pro}\label{recursion}
Let $\varphi$ be the function defined by $(\ref{price})$, 
$t_M\leq T<t_{M+1}$. Then 
\begin{equation}\varphi (s,y,\pi ,r,z)=\left\{\begin{array}{lll}
y^{\,p}e^{p(T-s)\pi}\,\,\varphi^M(s-t_M,\pi ,r,z),&t_M\leq s\leq 
T,\\
y^pe^{p(t_{i+1}-s)\pi}\,\varphi^i(s-t_i,\pi ,r,z),&t_i\leq s<t_{i
+1},&i=0,...,M-1,\end{array}
\right.\label{phi}\end{equation}
where the functions $\varphi^i$ are defined recursively in the 
following way: 
\begin{equation}\varphi^M(t,\pi ,r,z)=\E\left[e^{-\int_0^{T-t_M-t}
R^{sh,\pi}_{(r,z)}(u)du}\Phi\left(\pi ,R^{\pi}_r(T-t_M-t),R^{sh,\pi}_{
(r,z)}(T-t_M-t)\right)\right],\quad 0\leq t\leq T-t_M,\label{phiM}\end{equation}
\begin{equation}\varphi^{M-1}(t,\pi ,r,z)=\E\left[e^{-\int_0^{t_1
-t}R^{sh,\pi}_{(r,z)}(u)du}B\left(e^{p(T-t_M)\cdot}\varphi^M(0,\cdot 
,\cdot ,\cdot )\right)\left(\pi ,R^{\pi}_r(t_1-t),R^{sh,\pi}_{(r,
z)}(t_1-t)\right)\right],\,\,\,0\leq t\leq t_1,\label{phiM-1}\end{equation}
and 
\begin{equation}\varphi^i(t,\pi ,r,z)=\E\left[e^{-\int_0^{t_1-t}R^{
sh,\pi}_{(r,z)}(u)du}B\left(e^{pt_1\cdot}\varphi^{i+1}(0,\cdot ,\cdot 
,\cdot )\right)\left(\pi ,R^{\pi}_r(t_1-t),R^{sh,\pi}_{(r,z)}(t_1
-t)\right)\right],\label{phii}\end{equation}
for $\,0\leq t\leq t_1\,$ and for $\,i=0,...,M-2$. 
\end{pro}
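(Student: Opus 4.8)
The plan is to establish $(\ref{phi})$ by first treating the terminal interval $[t_M,T]$ and then propagating the representation backwards through the observation dates $t_{M-1},t_{M-2},\dots,t_0$ by iterated conditioning; the functions $\varphi^M,\varphi^{M-1},\dots,\varphi^0$ are produced in precisely this order, which is the order in which $(\ref{phiM})$--$(\ref{phii})$ define them. Throughout, finiteness of every expectation follows by induction from Lemma \ref{param-prop} together with the bound $(\ref{B-growth})$ (and the fact that $e^{c\,\cdot}$ preserves the class $(\ref{payoffgr})$), so the uses of the tower property and of Fubini's theorem below are legitimate.

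\emph{Terminal interval.} Fix $s\in[t_M,T]$. Since $\Pi$ is constant equal to $\Pi(s)=\pi$ on $[s,T]\subset[t_M,t_{M+1})$, we have $\Pi(T)=\pi$ and $Y(T)=Y(s)\,e^{(T-s)\pi}$, so in $(\ref{price})$ the factor $Y(T)^p=y^p e^{p(T-s)\pi}$ comes out of the conditional expectation. It then suffices to identify, given $(Y(s),\Pi(s),R(s),R^{sh}(s))=(y,\pi,r,z)$, the conditional law of the triple $\big(R(T),R^{sh}(T),\exp(-\int_s^T R^{sh}(u)\,du)\big)$. On $[s,T]$ the coefficients of $(\ref{dynamicsECB})$--$(\ref{dynamicsShort})$ involve inflation only through the frozen value $\pi=\Pi(s)$, and $R$ has a.s.\ no Poisson jump at the deterministic time $t_{M+1}$; hence, by the mutual independence and stationary independent increments of $(\{\epsilon_i\},N,\{U_n\},W)$ recalled in Section \ref{subsecwellpos}, together with the weak uniqueness that follows from the pathwise uniqueness in Proposition \ref{param-wdef}, this triple is distributed as $\big(R^\pi_r(T-s),R^{sh,\pi}_{(r,z)}(T-s),\exp(-\int_0^{T-s}R^{sh,\pi}_{(r,z)}(u)\,du)\big)$. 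This gives $\varphi(s,y,\pi,r,z)=y^p e^{p(T-s)\pi}\varphi^M(s-t_M,\pi,r,z)$ with $\varphi^M$ as in $(\ref{phiM})$, and Lemma \ref{param-prop} shows that $\varphi^M(t,\cdot,\cdot,\cdot)$ is continuous and satisfies $(\ref{payoffgr})$ uniformly in $t\in[0,T-t_M]$.

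\emph{Backward step.} Assume $(\ref{phi})$ is known on $[t_{i+1},t_{i+2})$ --- read as $[t_M,T]$ when $i+1=M$ --- so that $\varphi(t_{i+1},y',\pi',r',z')=(y')^p e^{c\,\pi'}\varphi^{i+1}(0,\pi',r',z')$ with $c=t_1$ if $i\le M-2$ and $c=T-t_M$ if $i=M-1$, and with $\varphi^{i+1}(0,\cdot,\cdot,\cdot)$ continuous and satisfying $(\ref{payoffgr})$. Fix $s\in[t_i,t_{i+1})$ and set $t=s-t_i\in[0,t_1)$. Splitting $\exp(-\int_s^T R^{sh})=\exp(-\int_s^{t_{i+1}}R^{sh})\exp(-\int_{t_{i+1}}^T R^{sh})$, conditioning on $\mathcal F_{t_{i+1}}$, pulling out the $\mathcal F_{t_{i+1}}$-measurable first factor, and invoking the Markov property of $(Y,\Pi,R,R^{sh})$, the inner conditional expectation equals $\exp(-\int_s^{t_{i+1}}R^{sh})\,\varphi\big(t_{i+1},Y(t_{i+1}),\Pi(t_{i+1}),R(t_{i+1}),R^{sh}(t_{i+1})\big)$. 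Because $R$ and $R^{sh}$ are a.s.\ continuous at the fixed time $t_{i+1}$, one may replace $R(t_{i+1}),R^{sh}(t_{i+1})$ by $R(t_{i+1}^-),R^{sh}(t_{i+1}^-)$; by $(\ref{dynamicsInfl})$, $\Pi(t_{i+1})=\gamma(\pi,R(t_{i+1}^-))+\epsilon_{i+1}$ (recall $\Pi(t_i)=\Pi(s)=\pi$); and $Y(t_{i+1})=y\,e^{(t_{i+1}-s)\pi}$. Inserting the induction hypothesis and using that $\epsilon_{i+1}\sim\mathcal N(0,v^2)$ is independent of $\mathcal F_{t_{i+1}^-}$, integration over $\epsilon_{i+1}$ yields exactly $B\big(e^{c\,\cdot}\varphi^{i+1}(0,\cdot,\cdot,\cdot)\big)\big(\pi,R(t_{i+1}^-),R^{sh}(t_{i+1}^-)\big)$, by the definition $(\ref{B})$ of $B$. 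The freezing argument from the terminal case then identifies the conditional law of $\big(R(t_{i+1}^-),R^{sh}(t_{i+1}^-),\exp(-\int_s^{t_{i+1}}R^{sh})\big)$, given $(Y(s),\Pi(s),R(s),R^{sh}(s))=(y,\pi,r,z)$, with that of $\big(R^\pi_r(t_1-t),R^{sh,\pi}_{(r,z)}(t_1-t),\exp(-\int_0^{t_1-t}R^{sh,\pi}_{(r,z)})\big)$; taking the outer expectation gives $\varphi(s,y,\pi,r,z)=y^p e^{p(t_{i+1}-s)\pi}\varphi^i(s-t_i,\pi,r,z)$, which is $(\ref{phii})$ for $i\le M-2$ and $(\ref{phiM-1})$ for $i=M-1$. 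By $(\ref{B-growth})$ the function $B(e^{c\,\cdot}\varphi^{i+1}(0,\cdot,\cdot,\cdot))$ still satisfies $(\ref{payoffgr})$, and one more application of Lemma \ref{param-prop} transfers this to $\varphi^i$, closing the induction.

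\emph{Main obstacle.} The prefactor bookkeeping and the recognition of $B$ after averaging out $\epsilon_{i+1}$ are routine. The delicate point is the freezing identification used twice above: one must argue rigorously that, conditionally on the state at the left endpoint of an inter-observation interval, the genuine pair $(R,R^{sh})$ run over that interval --- together with its discount functional --- has the law of the parametrized process of Proposition \ref{param-wdef} started from the current state. This rests on the constancy of $\Pi$ between observation dates, the absence of Poisson jumps at fixed times, the independence structure of Section \ref{subsecwellpos}, and weak uniqueness for $(\ref{param})$; it is also where one must be careful whether $\mathcal F_{t_{i+1}}$ or $\mathcal F_{t_{i+1}^-}$ is the relevant $\sigma$-algebra, the discrepancy being precisely $\epsilon_{i+1}$, the variable that $B$ integrates out.
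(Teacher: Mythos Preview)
Your proof is correct and follows essentially the same route as the paper's: treat the terminal interval by freezing $\Pi$ and identifying the conditional law with that of the parametrized process $(R^\pi,R^{sh,\pi})$, then step backward across each observation date by conditioning on $\mathcal{F}_{t_{i+1}}$ (equivalently $\mathcal{F}_{t_{i+1}^-}$) and integrating out $\epsilon_{i+1}$ to produce the operator $B$. Your added remarks on integrability via Lemma \ref{param-prop} and on the $\mathcal{F}_{t_{i+1}}$ versus $\mathcal{F}_{t_{i+1}^-}$ distinction are in fact more explicit than the paper, which simply writes out the chain of equalities.
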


\begin{proof}
See Appendix B.
\end{proof}

For an $\R$-valued diffusion process $X$ with time 
independent coefficients $b$ 
and $\sigma$, we know, by the Feynman-Kac formula, 
under suitable assumptions, that the function 
\[\psi (t,x)=\E\bigg[\exp\bigg(-\int_0^{T-t}X_x(s)ds\bigg)\Psi (X_
x(T-t))\bigg],\]
where $X_x$ is the process starting at $x$, is of class ${\cal C}^{
1,2}$ and satisfies 
\[\frac {\partial\psi}{\partial t}(t,x)+b(x)\frac {\partial\psi}{
\partial x}(t,x)+\frac 12\sigma^2(x)\frac {\partial^2\psi}{\partial 
x^2}-x\psi (t,x)=0,\qquad 0\leq t<T,~x\in\R\\
,\]
\[\psi (T,x)=\Psi (x),\qquad x\in\R.\]
By analogy, we consider, for each fixed $\pi$, 
for each function $\varphi^i(\cdot ,\pi ,\cdot ,\cdot )$ 
defined in Proposition \ref{recursion}, the 
following equation, which reflects the dynamics of $(R^{\pi}_{(r,
z)},R^{sh,\pi}_{(r,z)})$: 
\begin{equation}\begin{array}{c}
\frac {\partial\psi}{\partial t}(t,r,z)+k^{sh}\left(b(r)-z\right)\frac {
\partial\psi}{\partial z}(t,r,z)+\frac 12\overline {\sigma}^2\left
(|r-z|^2\right)z\frac {\partial^2\psi}{\partial z^2}(t,r,z)\\
+\overline {\lambda}\sum_{k=-m}^m\left[\psi\left(t,r+k\delta ,z\right
)-\psi\left(t,r,z\right)\right]q(\pi ,r,k\delta )-z\psi (t,r,z)=0
,\end{array}
\label{valeq}\end{equation}
with terminal conditions 
\begin{eqnarray}
&\psi (T-t_M,\cdot,\cdot)=\Phi ,&\mbox{\rm for }\varphi^M,\non\\
&\psi (t_1,\cdot,\cdot)=B\left(e^{p(T-t_M)\cdot}\varphi^M(0,\cdot ,\cdot ,\cdot 
)\right),&\mbox{\rm for }\varphi^{M-1},\label{terminal}\\
&\psi (t_1,\cdot,\cdot)=B\left(e^{pt_1\cdot}\varphi^{i+1}(0,\cdot ,\cdot ,\cdot 
)\right),&\mbox{\rm for }\varphi^i,\,\,\,i=0,...,M-2.\non
\end{eqnarray}

\noindent$(\ref{valeq})$-$(\ref{terminal})$ is not a standard partial 
differential equation and it is not clear whether it admits a classical 
solution. However it turns out that 
$(\ref{valeq})$-$(\ref{terminal})$ admits 
a unique viscosity solution, which is enough to solve 
it numerically with the algorithm of Appendix C. 
As mentioned in the Introduction, 
we will obtain existence and uniqueness of the viscosity 
solution to $(\ref{valeq})$-$(\ref{terminal})$ 
from a general result for valuation 
equations for contingent claims written 
on jump-diffusion underlyings proved in 
\cite{CostantiniPapiD'Ippoliti2012} and summarised in 
Appendix A. 
Since the state space of $(\ref{valeq})$-$(\ref{terminal})$ is unbounded, as 
usual in the literature, uniqueness will hold in the class of functions with 
a prescribed growth rate.

\begin{teo}\label{sol}
Let $\varphi$ be the function defined by $(\ref{price})$, where the 
payoff is given by $(\ref{payoff})$-$(\ref{payoffgr})$, and let 
$\varphi^i$ be the functions defined in Proposition \ref{recursion}. 
Then: For $i=0,...,M$, for each $\pi\in\R$, 
the function $\varphi^i(\cdot ,\pi ,\cdot ,\cdot )$ 
is the unique viscosity solution of the equation 
$(\ref{valeq})$-$(\ref{terminal})$ satisfying the growth 
condition $(\ref{payoffgr})$, uniformly in time. 
\end{teo}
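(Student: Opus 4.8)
The plan is to verify that the valuation equation \((\ref{valeq})\)-\((\ref{terminal})\) and the parametrized process \((R^\pi, R^{sh,\pi})\) fit exactly into the framework of the general result of \cite{CostantiniPapiD'Ippoliti2012} recalled in Appendix A, and then to identify \(\varphi^i(\cdot,\pi,\cdot,\cdot)\) with the unique viscosity solution produced by that result. First I would fix \(\pi\in\R\) and treat it as a frozen parameter throughout, so that the state variable is the pair \((r,z)\in(\underline r,\overline r)\times(0,\infty)\). The underlying is the jump-diffusion \((R^\pi_r, R^{sh,\pi}_{(r,z)})\) of Proposition \ref{param-wdef}: the \(z\)-component is a CIR-type diffusion with drift \(k^{sh}(b(r)-z)\) and squared diffusion coefficient \(\overline\sigma^2(|r-z|^2)z\), while the \(r\)-component is the pure-jump process with bounded intensity \(\overline\lambda\) and jump kernel \(q(\pi,r,k\delta)\). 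I would check, one by one, the hypotheses of the Appendix A theorem: the continuity and (one-sided) differentiability of \(p(\cdot,\cdot,k\delta)\) and \(\lambda\) assumed in \((\ref{continuitapelambda})\)-\((\ref{derivatefinitepelambda})\) give the required regularity of the jump coefficients near the boundary; the CIR-type condition \((\ref{bandsigmaproperty})\) together with \((\ref{sigmabarproperty1})\)-\((\ref{sigmabarproperty2})\) guarantees that \(z\) never reaches \(0\) and that the diffusion coefficient has the sublinear/regularity behaviour required; and \((\ref{roverandunderline})\) ensures the \(r\)-component stays in \((\underline r,\overline r)\), so the process never leaves the state space. The discount rate is \(z>0\), continuous, and the growth class is the one defined by \((\ref{payoffgr})\).

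Next I would verify that the terminal data in \((\ref{terminal})\) all lie in the admissible class. For \(\varphi^M\) this is immediate since the terminal value is \(\Phi\), which satisfies \((\ref{payoffgr})\) by assumption. For \(\varphi^{M-1}\) and \(\varphi^i\), \(i=0,\dots,M-2\), the terminal value is \(B(e^{c\,\cdot}\varphi^{i+1}(0,\cdot,\cdot,\cdot))\) for an appropriate constant \(c\) (namely \(p(T-t_M)\) or \(pt_1\)); here I would argue by backward induction on \(i\), from \(i=M\) down to \(i=0\). The inductive hypothesis is that \(\varphi^{i+1}(0,\cdot,\cdot,\cdot)\) is continuous and satisfies the growth bound \((\ref{payoffgr})\); then \(e^{c\,\pi}\varphi^{i+1}(0,\pi,r,z)\) still satisfies a bound of the form \(C_0 e^{C_1|\pi|}(1+z)\) with updated constants, and the computation displayed just before Proposition \ref{recursion} (the estimate \(\E[e^{C_1|\gamma(\pi,r)+\epsilon|}]<\infty\)) shows that applying \(B\) preserves both continuity and the growth condition, via \((\ref{B-growth})\). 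Hence the terminal datum of the \(\varphi^i\) equation is in the right class, so the Appendix A theorem applies and yields a unique viscosity solution \(\psi^i\) of \((\ref{valeq})\)-\((\ref{terminal})\) in the prescribed growth class, with the Feynman–Kac-type stochastic representation
\[
\psi^i(t,r,z)=\E\!\left[e^{-\int_0^{\tau_i-t}R^{sh,\pi}_{(r,z)}(u)\,du}\,g_i\!\left(\pi,R^\pi_r(\tau_i-t),R^{sh,\pi}_{(r,z)}(\tau_i-t)\right)\right],
\]
where \(\tau_i=T-t_M\) for \(i=M\) and \(\tau_i=t_1\) otherwise, and \(g_i\) is the corresponding terminal datum.

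Finally I would match this representation with the definitions \((\ref{phiM})\)-\((\ref{phii})\) in Proposition \ref{recursion}: for \(i=M\) the terminal datum is \(\Phi\) and the two formulas coincide verbatim; for \(i<M\) the terminal datum \(g_i\) is exactly \(B(e^{c\,\cdot}\varphi^{i+1}(0,\cdot,\cdot,\cdot))\), which is precisely what appears inside the expectation in \((\ref{phiM-1})\) and \((\ref{phii})\). Therefore \(\psi^i\) and \(\varphi^i(\cdot,\pi,\cdot,\cdot)\) are given by the same stochastic representation and hence are equal; since \(\psi^i\) was obtained as \emph{the} unique viscosity solution in the growth class, \(\varphi^i(\cdot,\pi,\cdot,\cdot)\) is that unique solution, which is the claim. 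Lemma \ref{param-prop} is what guarantees at each step that the representation integral is finite and that \(\varphi^i(0,\cdot,\cdot,\cdot)\) is continuous with the right growth, feeding the induction. I expect the main obstacle to be bookkeeping rather than conceptual: one must check carefully that the specific structural assumptions \((\ref{roverandunderline})\)-\((\ref{bandsigmaproperty})\) on \(\lambda\), \(p\), \(b\) and \(\overline\sigma\) are exactly the ones demanded by the general theorem of \cite{CostantiniPapiD'Ippoliti2012} — in particular the boundary behaviour of the jump kernel at \(\overline r-k\delta\) and \(\underline r+k\delta\) via \((\ref{derivatefinitepelambda})\), and the non-attainability of \(z=0\) secured by \((\ref{bandsigmaproperty})\) — and that the exponential-in-\(\pi\) factors do not spoil membership in the growth class through the induction.
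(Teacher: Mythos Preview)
Your plan is correct and matches the paper's approach exactly: fix $\pi$, cast $(\ref{valeq})$ as an instance of $(\ref{eq-1})$--$(\ref{op})$ with $c(r,z)=z$ and $g\equiv 0$, verify (H1)--(H4) of Theorem \ref{CPD}, and run backward induction on $i$ using Lemma \ref{param-prop} and $(\ref{B-growth})$ to propagate the growth condition through the terminal data.

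The one place where you understate the work is hypothesis (H3). The paper does not merely cite $(\ref{derivatefinitepelambda})$ and $(\ref{bandsigmaproperty})$ as ``regularity near the boundary'' but uses them to \emph{construct} an explicit Lyapunov function $V^\pi(r,z)=V_0^\pi(r)+V_1^\pi(z)$: the piece $V_1^\pi(z)=z^2-\ln z$ is handled by the same computation as in the proof of Theorem \ref{theowellpos}, while $V_0^\pi(r)=\int_r^{\overline r}\underline\beta(s)^{-1}ds+\int_{\underline r}^r\overline\beta(s)^{-1}ds$ is built from auxiliary $C^1$ majorants $\underline\beta,\overline\beta$ of the jump probabilities $p(\pi,\cdot,\pm k\delta)$, and the existence of such $C^1$ majorants vanishing at the endpoints is precisely what the one-sided derivative condition $(\ref{derivatefinitepelambda})$ provides. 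This construction, and the verification that $L^\pi V_0^\pi\le C$, is the substantive step you are calling bookkeeping.

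A related point: the growth class in Theorem \ref{CPD} is $|\psi|\,l(|\psi|)\le C(1+V^\pi)$, not $(\ref{payoffgr})$ directly. The paper chooses $l(s)=\sqrt s$ and uses the elementary inequality $(1+z)^{3/2}\le 4(1+z^2-\ln z)\le 4(1+V^\pi(r,z))$ to embed the $(\ref{payoffgr})$-class into the $V^\pi$-class, obtaining uniqueness there; it then invokes Lemma \ref{param-prop} to show that the solution $\varphi^i$ actually lies in the smaller $(\ref{payoffgr})$-class. This two-way passage is what makes the uniqueness claim \emph{within the $(\ref{payoffgr})$-class} legitimate, and you should make it explicit rather than identifying the two classes at the outset.
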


\begin{proof} The proof can be found in Appendix 
B.\end{proof}


\section{Calibration to ZCIIS and comparison with a 
benchmark model}\label{sectionNS} 

\setcounter{equation}{0}

We consider the affine-based, stochastic factor model of 
\cite{HoHuangYildirim2014} as our term of  comparison. 
Actually a setting such as theirs can be considered as a 
benchmark in empirical studies, as suggested 
by \cite{DuffiePanSingleton2000}. 

The comparison is focused on Inflation-Indexed Swaps which are swap contracts whose payoffs 
are linked to a specific inflation index. These instruments are mainly used for an inflation-risk-exposed 
investor to hedge against (or exchange) the inflation-risk undertaken. Among these derivatives, 
Zero-Coupon Inflation-Indexed Swaps are the most actively traded instruments and their quotations 
have been proven to provide additional information on inflation expectation. 

A Zero-Coupon Inflation-Indexed Swap (ZCIIS) contract is a bilateral
agreement that enables an investor or a hedger to secure an inflation-protected return with respect to an inflation index. 
The inflation receiver pays a predetermined fixed rate, and receives from the inflation seller inflation-linked payments. 
In the ZCIIS, starting at time $t_0$, with final time $T>t_0$, and nominal amount $
{\cal N}$, the fixed-leg payer pays ${\cal N}\left[(1+K_{{\rm Z}{\rm C}
{\rm I}{\rm I}{\rm S}}(t_0,T))^{T-t_0}-1\right]$ 
when the contract matures, $K_{{\rm Z}{\rm C}{\rm I}{\rm I}{\rm S}}
(t_0,T)$ being the contract 
fixed rate corresponding to the quotation in the market. The floating-leg payer pays 
${\cal N}\left[Y(T)/Y(t_0)-1\right]$, where $Y(t)$ is the value of the inflation 
index at time $t$.  

Following \cite{HoHuangYildirim2014} and also \cite{Mercurio2005}, by normalizing the nominal amount of the contract to be $
1$, 
under the usual no arbitrage condition, the fair swap rate 
for the ZCIIS starting at time $t_0$, with maturity $T$, is given by: 
\begin{equation}\label{ZCIISrate}K_{{\rm Z}{\rm C}{\rm I}{\rm I}{\rm S}}
(t_0,T)=\left[\frac {P_R(t_0,T)}{P_N(t_0,T)}\right]^{1/(T-t_0)}-1
,\end{equation}
where $P_N(t_0,T)$ denotes the price of the 
nominal zero-coupon, i.e. the zero-coupon, starting at $t_0$, that pays $
1$ at 
the maturity $T$, and $P_R(t_0,T)$ 
denotes the price of the 
real zero-coupon, i.e. the zero-coupon, starting at $t_0$, that pays 
$Y(T)/Y(t_0)$ at the maturity $T$. 

\subsection{Valuation of a ZCIIS in the benchmark model}\label{Ho.model}

In \cite{HoHuangYildirim2014}, the authors assume an 
affine-based model with an $n$-dimensional latent state  
vector $X(t)$ specified by a vector Vasicek process (see 
\cite{DuffiePanSingleton2000}):
\begin{equation}\label{Ho.model.1}dX(t)={\cal K}\,(\mu -X(t))dt+\Sigma\,
dW(t)\end{equation}
where ${\cal K}$, $\Sigma$ are $n\times n$ constant matrices, $\mu$ is a $
n\times 1$ vector and $W$ is a $n$-dimensional Wiener process. 
This implies that the state variable vector is mean-reverting with constant volatility. 
Following the specifications of \cite{D'AmicoKimWei2018}, 
the price of the real zero-coupon bond from $t$ to $T$ and its nominal 
counterpart can be expressed as: 
\begin{equation}\label{Ho.model.2}P_J(t,T)=\exp\left(A^J(T-t)+\langle 
B^J(T-t),X(t)\rangle\right),\quad J=R,N,\end{equation}
where $A^J(\tau )$, $B^J(\tau )$ solve a system of Riccati-type differential 
equations, whose coefficients involve the parameters ${\cal K}$, $
\mu$ 
and $\Sigma$ in $(\ref{Ho.model.1})$ and other ones. 
We refer readers to Section 2.2, page 161 in \cite{HoHuangYildirim2014} 
for a detailed description of their model. By 
$(\ref{Ho.model.2})$, $(\ref{ZCIISrate})$ takes the form 
\begin{equation}\label{Ho.model.3}K_{{\rm Z}{\rm C}{\rm I}{\rm I}
{\rm S}}(t_0,T)=\exp\left(\frac {A^R(T-t_0)-A^N(T-t_0)}{T-t_0}+\langle\frac {
B^R(T-t_0)-B^N(T-t_0)}{T-t_0},X(t_0)\rangle\right)-1.\end{equation}
Thus, for each 
set of values of the parameters of the model, the ZCIIS rate can be computed by solving 
numerically the systems of Riccati-type differential 
equations. 

\subsection{Valuation of a ZCIIS in our model}\label{ourmodel}

\noindent In order to implement the comparison, we deduce 
the ZCIIS rate under our modeling technique. 
We shall use the notation $(\cdot )^{{\rm m}{\rm o}{\rm d}{\rm e}
{\rm l}}$ to refer to our model and we assume that the nominal value of the contract is $
1$. 
In our model, the nominal zero-coupon bond price is given by
\begin{equation}\label{our.model.1}P_N^{{\rm m}{\rm o}{\rm d}{\rm e}
{\rm l}}(t_0,T)=\mathbb{E}\left[e^{-\int_{t_0}^TR^{sh}(s)ds}\,\Big
|{\cal F}_{t_0}\right].\end{equation}
and the real zero-coupon bond price is given by 
\begin{equation}\label{our.model.2}P_R^{{\rm m}{\rm o}{\rm d}{\rm e}
{\rm l}}(t_0,T)=\frac 1{Y(t_0)}\mathbb{E}\left[e^{-\int_{t_0}^TR^{
sh}(s)ds}Y(T)\Bigg|{\cal F}_{t_0}\right].\end{equation}

With the notation of Section \ref{sectionVE}, $P_N^{{\rm m}{\rm o}
{\rm d}{\rm e}{\rm l}}(t_0,T)$ is the 
price of a derivative with payoff of the form  
$(\ref{payoff})$-$(\ref{index})$ with $p=0$ and $\Phi (\pi ,r,z)=
1$, while 
$P_R^{{\rm m}{\rm o}{\rm d}{\rm e}{\rm l}}(t_0,T)$ is $\frac 1{Y(
t_0)}$ times the 
price of a derivative with payoff 
of the form  
$(\ref{payoff})$-$(\ref{index})$ with $p=1$ and $\Phi (\pi ,r,z)=
1$. 
Therefore, for each set of values of the parameters of 
the model,  both $P_N^{{\rm m}
{\rm o}{\rm d}{\rm e}{\rm l}}(t_0,T)$ and $P_R^{{\rm m}{\rm o}{\rm d}
{\rm e}{\rm l}}(t_0,T)$ 
can be computed by solving numerically the system of 
partial differential equations $(\ref{valeq})$-$(\ref{terminal})$. 
$K_{{\rm Z}{\rm C}{\rm I}{\rm I}{\rm S}}^{{\rm m}{\rm o}{\rm d}{\rm e}
{\rm l}}(t_0,T)$ can then be evaluated by $(\ref{ZCIISrate})$.

An efficient numerical scheme to solve 
the system of partial differential equations $(\ref{valeq})$-$(\ref{terminal})$ is described in Appendix C.

\subsection{Numerical tests}

We use financial market data for ZCIIS swap rates, which exist for the biggest 
European Monetary Union countries, and provide a valuable source of information. 
The dataset is provided by the Bloomberg platform and covers 
the period of time Jan 2008 to Oct 2015 
with a wide range of maturities (from 1 to 30 years).
Specifically, we consider a ZCIIS for the aggregate 
euro area which uses the Harmonized Index of Consumer Prices (HICP) as an indicator of inflation (excluding tobacco). 
The HICP index is available monthly and is obtained by Eurostat. For each month in the sample period, 
we consider the day where the HICP index is observed and we perform a cross-sectional estimation against 
the market swap rates for all maturities available in the dataset. 

We use two discrepancy measures documented in several works in the literature: the root mean-square error 
(RMSE) and the average relative prediction error (ARPE). They are defined as follows:
\begin{equation}ARPE(t_j)=\frac 1I\sum_{i=1}^I\frac {|K(t_j,t_j+T_i)-K^{{\rm m}
{\rm a}{\rm r}{\rm k}{\rm e}{\rm t}}_{ij}|}{K_{ij}^{{\rm m}{\rm a}
{\rm r}{\rm k}{\rm e}{\rm t}}},\end{equation}
\begin{equation}RMSE(t_j)=\sqrt {\sum_{i=1}^I\frac {|K(t_j,t_j+T_i)-K^{{\rm m}{\rm a}
{\rm r}{\rm k}{\rm e}{\rm t}}_{ij}|^2}I},\end{equation}
where $K(t_j,T_i)$ is the model implied swap rate at day $
t_j$, for the maturity $t_j+T_i$, for $i=1,\ldots I$, for $j=1,\ldots J$. 

For each model and for every $t_j$, we find an 
estimate for the model parameters by minimizing $RMSE^m(t_j)$. 

In order to compare the performances of the two 
models, for each of them we compute the above error measures using 
the estimated parameters and we take the average 
value over $\{t_j\}$:
\begin{eqnarray}
\overline {ARPE}&=&\frac 1J\sum_{j=1}^JARPE(t_j),\\
\overline {RMSE}&=&\frac 1J\sum_{j=1}^JRMSE(t_j).\end{eqnarray}
g
Concerning the parameters, for the 
\cite{HoHuangYildirim2014} model, following \cite{D'AmicoKimWei2018}
we use the following specification of the coefficients for 
the latent factor process $X(t)$:
\[\mu =\left[\begin{array}{c}
0\\
0\\
0\end{array}
\right],\qquad\Sigma =\left[\begin{array}{ccc}
0.01&0&0\\
\sigma_{21}&0.01&0\\
\sigma_{31}&\sigma_{32}&0.01\end{array}
\right],\qquad {\cal K}=\left[\begin{array}{ccc}
\kappa_{11}&0&0\\
0&\kappa_{22}&0\\
0&0&\kappa_{33}\end{array}
\right].\]
Hence $\sigma_{21}$, $\sigma_{31},\sigma_{32},$ $\kappa_{11}$, $\kappa_{
22},$ $\kappa_{33}$ are the parameters to be 
estimated. In addition, in the Riccati-type equations to be 
solved numerically, there are 11 more parameters. 
As usual in a cross-sectional estimation, 
the value of the state variable $X(t_j)$ is considered as a 
parameter in the model and is estimated jointly with 
all other parameters. Thus altogether there are 20 parameters to be 
calibrated. 

In our model, for the inflation index, in equation 
(\ref{gamma}), we set 
\[\alpha =1,\quad\pi^{\star}=\ln(1.02),\quad v=\sigma_{\Pi},\]
$ $where $\sigma_{\Pi}$ is the historical standard deviation of the 
monthly increments of the inflation rate 
determined by the HICP index, and we estimate the parameters 
\[\beta ,\quad k^{\Pi}.\]
For the ECB rate (Section \ref{subsecECB}), we choose 
\[\underline{r}=0.05\%,\,\,\,\overline{r}=4.5\%,\,\,\,m=1,\,\,\,\delta =0.25\%,\]
and the 
probabilities $q$ as 
\[q(\pi ,r,\delta )=\bigg[\bigg(\frac 1{0.3\sigma_{\Pi}}\big(\pi 
-(\pi^{*}+0.2\sigma_{\Pi})\big)\bigg)_{+}\wedge 1\bigg]\bigg[\bigg
(\frac 1{3\delta}\big((\overline r-\delta )-r\big)\bigg)_{+}\wedge 
1\bigg],\]
\[q(\pi ,r,-\delta )=\bigg[\bigg(\frac 1{0.3\sigma_{\Pi}}\big((\pi^{
*}-0.2\sigma_{\Pi})-\pi\big)\bigg)_{+}\wedge 1\bigg]\bigg[\bigg(\frac 
1{3\delta}\big(r-(\underline r+\delta )\big)\bigg)_{+}\wedge 1\bigg
],\]
\[q(\pi ,r,0)=1-q(\pi ,r,\delta )-q(\pi ,r,-\delta ),\]
and we estimate 
\[\overline {\lambda}.\]
Finally, for the short-term interest rate, we take the 
function $\overline {\sigma}$ constant, $\overline {\sigma}=\sigma_
0$, and we estimate 
all the parameters of equation $(\ref{dynamicsShort})$: 
\[k^{sh},\quad\sigma_0,\quad b_0,\quad b_1,\]
$ $with the constraints 
\[\label{our.model.sh}k^{sh}>0,\qquad\sigma_0>0,\qquad k^{sh}\big
(b_0+b_1\underline r\big)>\frac 12\sigma_0^2,\qquad k^{sh}\big(b_
0+b_1\overline r\big)>\frac 12\sigma_0^2.\]
In addition, the value of $R^{sh}(t_j)$ is considered as a 
parameter in the model and is estimated jointly with 
all other parameters. Thus altogether there are 8 parameters to be 
calibrated. 

We compare the performances of our model and of the 
model of \cite{HoHuangYildirim2014} in two periods: the 
whole period ranging from January 2008 to October 2015, 
and the period from January 2008 to November 2011, 
when, due to the subprime crisis, interest rates dropped 
from over $5\%$ to less than $0.5\%$ (see Figure \ref{figrates}). The results are 
summarized in Table 
\ref{tab:error} and Table \ref{tab:error2}, respectively. 

\begin{table}[h!]
\centering
\begin{tabular}{ l c c c c}
                    &$\overline {\textbf{RMSE}}$ &$\overline {\textbf{ARPE}}$\\
                        \hline
                             
\textbf{Our model} &0.1679 &0.01098   \\
\textbf{\cite{HoHuangYildirim2014} model}   &0.3146  &0.02101  \\
\hline
\end{tabular}
\caption {\em Error measures $\overline {ARPE}$  and $\overline {
RMSE}$  for the period Jan. 2008 - Oct. 2015 evaluated for our model and the 
\cite{ HoHuangYildirim2014}  model. The inflation swap rates in the sample are expressed in percentage.}
\label{tab:error}\end{table}

\begin{table}[h!]
\centering
\begin{tabular}{ l c c }
                    &$\overline {\textbf{RMSE}}$ &$\overline {\textbf{ARPE}}$\\
                        \hline
                             
\textbf{Our model} &0.19741  &0.004972 \\
\textbf{\cite{HoHuangYildirim2014} model}  &0.4271  &0.01076\\
\hline
\end{tabular}
\caption {\em Error measures $\overline {ARPE}$  and $\overline {RMSE}$  for the period Jan. 2008 - Nov. 2011 evaluated 
 for our model and the \cite{HoHuangYildirim2014} model. The inflation swap rates 
 in the sample are expressed in  percentage.}\label{tab:error2}
\end{table}

\section{Conclusions}
We have proposed a new model for the joint evolution of 
the inflation rate, the ECB interest rate and the short-term 
interest rate. We have derived a valuation equation 
that allows us to price inflation-linked derivatives by 
a numerical algorithm. We have compared our model to 
one of the best known models in the literature 
(\cite{D'AmicoKimWei2018} and \cite{HoHuangYildirim2014}): Calibrating both models to the same market data 
(ZCIIS from 2008 to 2015), the performance of our model in fitting the data 
appears to be significatively better than the \cite{HoHuangYildirim2014} 
model, although our model employs many fewer 
parameters.

\appendix
\appendixpage
\addappheadtotoc


\section{Viscosity solutions of integro-differential 
valuation equations}

\renewcommand {\theequation}{A.\arabic{equation}}
\setcounter{equation}{0}

For the convenience of the reader we summarise here a general result of \cite{CostantiniPapiD'Ippoliti2012}, on which the proof of Theorem \ref{sol} relies. 
\cite{CostantiniPapiD'Ippoliti2012} considers a general equation of the form
\begin{equation}\label{eq-1}\left\{\begin{array}{ll}
\partial_t\psi (t,x)+L\psi (t,x)-c(x)\psi (t,x)=g(t,x),\quad&(t,x
)\in (0,T)\times D,\\
\psi (T,x)=\Psi (x),\quad&x\in D,\end{array}
\right.\end{equation}
with
\begin{equation}\label{op}Lf(x)=\nabla f(x)b(x)+\frac 12\mbox{\rm tr}\left
(\nabla^2f(x)a(x)\right)+\int_D\left[f(x')-f(x)\right]m(x,dx'),\end{equation}
where $D$ is a (possibly unbounded) starshaped open subset of $\R^
d$.
The coefficients and data are assumed to satisfy the following 
conditions.

\begin{itemize}
\item[(H1)]\label{H1}
$a:D\rightarrow\R^{d\times d}$ is of the form $a=\sigma\sigma^T$, 
with $a=(a_{i,j})_{i,j=1,\ldots ,d}$, where $a_{i,j}\in {\cal C}^
2(D)$, 
and $b:D\rightarrow\R^d$ is Lipschitz continuous on compact subsets 
of $D$. 

\item[(H2)]\label{H2}
Denoting by ${\cal M}(D)$ the space of finite Borel measures on D, endowed with 
the weak convergence topology,
$m:D\rightarrow {\cal M}(D)$ is continuous and
\begin{eqnarray}
\sup_{x\in D}\left|\int_Df(x')m(x,dx')\right|<\infty ,\qquad\,\,\,
\forall\;f\in {\cal C}_c(D).\end{eqnarray}

\item[(H3)]\label{H3}
There exists a nonnegative function $V\in {\cal C}^2(D)$, such that
\begin{eqnarray}
&&\int_DV(x')m(x,dx')<+\infty ,\,\;\;\forall\;x\in D,\quad LV(x)\leq 
C\left(1+V(x)\right),\;\;\forall\;x\in D,\\
&&\lim_{x\in D,\,x\rightarrow x_0}V(x)=+\infty ,\,\forall\;x_0\in
\partial D,\quad\lim_{x\in D,\,|x|\rightarrow +\infty}V(x)=+\infty 
,\end{eqnarray}

\item[(H4)]\label{H4} $g\in {\cal C}([0,T]\times D)$, $c,\,\psi\in 
{\cal C}(D)$, and
$c$ is bounded from below. There exists a strictly increasing function $
l:[0,+\infty )\rightarrow [0,+\infty )$, such that
\begin{eqnarray}
&&s\mapsto sl(s)\;\quad\mbox{\rm is convex,}\qquad\lim_{s\rightarrow 
+\infty}l(s)=+\infty ,\\
&&(s_1+s_2)l(s_1+s_2)\leq C\left(s_1l(s_1)+s_2l(s_2)\right),\quad\;\;
\forall\;s_1,s_2\geq 0,\end{eqnarray}
and the following holds:
\begin{eqnarray}
|g(t,x)|l(|g(t,x)|)&&\leq C_T\left(1+V(x)\right),\\
|\Psi (x)|l(|\Psi (x)|)&&\leq C\left(1+V(x)\right),\end{eqnarray}
for all $(t,x)\in [0,T]\times D$.
\end{itemize}

We recall the definition of 
viscosity solution, in the present set up. 

\begin{defi}\label{viscsol}
A viscosity solution of -$(\ref{eq-1})$-$(\ref{op})$ is a continuous 
function $\psi$ defined on $[0,T]\times D$ such that 
\[\psi (T,x)=\Psi (x),\qquad x\in D,\]
and, for each $x\in D$, $t\in [0,T)$:

\item for every $f\in {\cal C}^{1,2}\big([0,T]\times D\big)$ such that 
\[\sup_{(\tilde {t},\tilde {x})\in [0,T]\times D}\left(\psi (\tilde 
t,\tilde x)-f(\tilde t,\tilde x)\right)=\left(\psi (t,x)-f(t,x)\right
)=0,\]
it holds 
\[\begin{array}{c}
\frac {\partial f}{\partial t}(t,x)+Lf(t,x)-c(x)\psi (t,x)\geq g(
t,x);\end{array}
\]
\item for every $f\in {\cal C}^{1,2}\big([0,T]\times D\big)$ such that 
\[\inf_{(\tilde {t},\tilde {x})\in [0,T]\times D}\left(\psi (\tilde 
t,\tilde x)-f(\tilde t,\tilde x)\right)=\left(\psi (t,x)-f(t,x)\right
)=0,\]
it holds 
\[\begin{array}{c}
\frac {\partial f}{\partial t}(t,x)+Lf(t,x)-c(x)\psi (t,x)\leq g(
t,x).\end{array}
\]
\end{defi}

The result proved in \cite{CostantiniPapiD'Ippoliti2012}  is the following. 

\begin{teo}(\cite{CostantiniPapiD'Ippoliti2012})\label{CPD}
Assume (H1), (H2), (H3) and (H4). 
Then for every $x\in D$, there exists
one and only one stochastic process, $X_x$, 
that solves the martingale problem for
$L$, $\mathcal D(L)=\mathcal C^2_c(D)$, with initial condition $x$. The function
\begin{eqnarray}
\psi (t,x)=\E\left[\Psi (X_x(T-t))e^{-\int_0^{T-t}c(X_x(r))dr}-\int_
0^{T-t}g(t+s,X_x(s))e^{-\int_0^sc(X_x(r))dr}ds\,\right],\label{fk}\\
(t,x)\in [0,T]\times D,\non\end{eqnarray}
\vskip.1in
\noindent
is the only viscosity solution of $(\ref{eq-1})$-$(\ref{op})$  satisfying
\begin{equation}|\psi (t,x)|l(|\psi (t,x)|)\leq C_T\left(1+V(x)\right
),\;\qquad\forall\;(t,x)\in [0,T]\times D.\label{V-growth}\end{equation}
\end{teo}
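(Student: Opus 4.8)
The plan is to prove Theorem \ref{CPD} in four steps: well-posedness of the martingale problem for $L$ together with the Feller property of $x\mapsto X_x$; finiteness of the representation (\ref{fk}) and the growth bound (\ref{V-growth}); verification that the function $\psi$ defined by (\ref{fk}) is a viscosity solution of (\ref{eq-1})--(\ref{op}); and uniqueness within the prescribed growth class.

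For the martingale problem I would argue by localization. On each relatively compact open set with closure in $D$, the ${\cal C}^2$-regularity of $a=\sigma\sigma^{T}$ in (H1) yields a locally Lipschitz square root, $b$ is Lipschitz there, and (H2), applied to compactly supported functions increasing to indicators of compacts, gives $\sup_x m(x,K)<\infty$ for $K$ compact; standard theory then gives existence and uniqueness of the stopped problem. The crucial ingredient is non-explosion: by (H3), $LV\le C(1+V)$ and $V\to+\infty$ on $\partial D$ and at infinity make $e^{-Ct}V(X_x(t))$ minus a constant a supermartingale up to each exit time, so these exit times increase to $+\infty$ almost surely; the standard localization theorem for martingale problems then produces a unique global solution, and continuity of the coefficients yields the Feller property, i.e. continuity of $x\mapsto{\rm Law}(X_x)$ on path space.

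Next, $e^{-\int_0^s c(X_x(r))\,dr}$ is bounded by a constant depending only on $T$ and the lower bound of $c$. Applying the martingale problem to $V$ (an extended-domain argument, legitimate because $\int V\,dm<\infty$) and Gronwall to $LV\le C(1+V)$ gives $\E[V(X_x(t))]\le C_T(1+V(x))$ for $t\in[0,T]$. With $h(s)=sl(s)$, which is convex, increasing and superlinear by (H4), Jensen's inequality plus the subadditivity $(s_1+s_2)l(s_1+s_2)\le C(s_1l(s_1)+s_2l(s_2))$ lets me bound $h(|\psi(t,x)|)$ by $\E$ of $h$ evaluated at the integrand, split off the $\Psi$- and $g$-contributions, and conclude both that (\ref{fk}) is finite and that $\psi$ obeys (\ref{V-growth}). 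Continuity of $\psi$ on $[0,T]\times D$ then follows from the Feller property and uniform integrability of $\{\Psi(X_x(T-t))\}$ and $\{V(X_x(s))\}$ over compacts in the parameters (a consequence of superlinearity of $h$), while $\psi(T,x)=\Psi(x)$ is immediate from $X_x(0)=x$. The viscosity inequalities come from the Markov identity $\psi(t,x)=\E[\psi(t+h,X_x(h))e^{-\int_0^h c}-\int_0^h g(t+s,X_x(s))e^{-\int_0^s c}\,ds]$: for a test function $f$ touching $\psi$ from above at $(t,x)$, replace $\psi(t+h,X_x(h))$ by $f(t+h,X_x(h))$, apply Dynkin's formula to $f$ (with the usual localization to handle integrability of the second-order and nonlocal parts), divide by $h$ and let $h\downarrow0$, using continuity of the coefficients; the supersolution inequality is symmetric.

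Uniqueness is the main obstacle. I would establish a comparison principle — any viscosity subsolution $u$ and supersolution $w$ of (\ref{eq-1})--(\ref{op}) obeying (\ref{V-growth}) with $u(T,\cdot)\le w(T,\cdot)$ satisfy $u\le w$ on $[0,T]\times D$ — from which uniqueness follows, since by the previous step $\psi$ is simultaneously a sub- and a supersolution. The argument is the Crandall--Ishii--Lions doubling-of-variables technique adapted to integro-differential operators, where the continuity of $m$ in (H2) is exactly what lets the nonlocal terms pass to the limit under the doubling, combined with a penalization by $\varepsilon\,(V(x)+V(x'))$. Since $V$ blows up on $\partial D$ and at infinity whereas $u,w$ grow strictly more slowly (that is what (\ref{V-growth}) encodes, $h$ being superlinear), this penalization confines the relevant maxima to a fixed compact subset of $D$ and annihilates the boundary and far-field contributions; $LV\le C(1+V)$ bounds the error it introduces into the equation, and a Gronwall estimate in $t$ closes the bound after letting the doubling parameter and then $\varepsilon$ tend to zero. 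I expect the delicate points to be the simultaneous handling of the nonlocal term and of the $V$-penalization near $\partial D$, and the verification that the growth scale fixed by $l$ is at once wide enough to contain $\psi$ and narrow enough for comparison to hold.
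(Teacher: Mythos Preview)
The paper does not prove Theorem~\ref{CPD}. As the header of Appendix~A makes explicit (``For the convenience of the reader we summarise here a general result of \cite{CostantiniPapiD'Ippoliti2012}''), and as the attribution in the theorem label itself signals, this is a quoted result: the paper merely \emph{states} it and then \emph{applies} it in the proof of Theorem~\ref{sol}. There is therefore no proof in the paper to compare your proposal against.

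Your four-step sketch (localized well-posedness plus non-explosion via the Lyapunov function $V$ from (H3); moment and growth bounds through $\E[V(X_x(t))]\le C_T(1+V(x))$ combined with the convexity and subadditivity of $s\mapsto sl(s)$ from (H4); the viscosity property from the Markov/dynamic-programming identity and Dynkin's formula; comparison via doubling of variables with a $V$-penalization) is a coherent outline and is broadly in the spirit of how \cite{CostantiniPapiD'Ippoliti2012} proceeds. If anything, be aware that the genuinely technical content of that reference lies precisely in the points you flag as ``delicate'': making the comparison principle work when the diffusion coefficient degenerates on $\partial D$ and when the nonlocal term and the $V$-penalization interact. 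Your sketch names the right ingredients but does not carry them out; for the purposes of the present paper, however, nothing more is required than invoking the cited theorem.
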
\


\section{Proofs}

\renewcommand {\theequation}{B.\arabic{equation}}
\setcounter{equation}{0}

\noindent {\bf Proof of Theorem \ref{theowellpos}}
\vskip.1in

Setting $t_0:=0$ and $(\Pi (0),R(0),R^{sh}(0)):=\left(\Pi_0,R_0,R^{
sh}_0\right)$, we 
claim that, given a triple 
of $\mathbb{R}\times [\underline r,\overline r]\times\R$-valued, $
\{\mathcal{F}_{t_i}\}$-measurable r.v.'s 
$\left(\Pi (t_i),R(t_i),R^{sh}(t_i)\right)$, $(\Pi ,R,R^{sh})$ is pathwise uniquely 
defined on the interval $[t_i,t_{i+1}]$ and \break
$\left(\Pi (t_{i+1}),R(t_{i+1}),R^{sh}(t_{i+1})\right)$ is $\{\mathcal{
F}_{t_{i+1}}\}$-measurable. 
To see this, observe, first of all, that the probability that $N$ jumps at any of $
t_1,...,t_M$ is zero. 
Therefore, denoting by $\{\vartheta_n\}$ the jump times of 
$N$, $R$ can be defined simply in the following way: 
If $N(t_{i+1})>N(t_i)$, 
\begin{eqnarray*}
R(\vartheta_{N(t_i)+n}):=R(\vartheta_{N(t_i)+n-1}\vee t_i)+J\left
(\Pi (t_i),R(\vartheta_{N(t_i)+n-1}\vee t_i),U_{N(t_i)+n}\right),\\
\mbox{\rm \ for }1\leq n\leq N(t_{i+1})-N(t_i),\end{eqnarray*}
\[R(t):=R(\vartheta_{N(t_i)+n-1}\vee t_i)\quad\mbox{\rm \ for }\vartheta_{
N(t_i)+n-1}\vee t_i\leq t<\vartheta_{N(t_i)+n},\quad 1\leq n\leq 
N(t_{i+1})-N(t_i),\]
\[R(t):=R(\vartheta_{N(t_{i+1})})\qquad\mbox{\rm \ for }\vartheta_{
N(t_{i+1})}\leq t\leq t_{i+1}.\]
If $N(t_{i+1})=N(t_i)$,  
\[R(t):=R(t_i),\qquad\mbox{\rm \ for }t_i\leq t\leq t_{i+1}.\]
Note that $R(\vartheta_{N(t_i)+n}\wedge t_{i+1})$ is 
$\{\mathcal{F}_{\vartheta_{N(t_i)+n}\wedge t_{i+1}}\}$-measurable for all $
n\geq 1$, 
and that $R(t_{i+1})$ is $\{\mathcal{F}_{t_{i+1}}\}$-measurable.
Denote $\vartheta^i_0:=t_i$, $\vartheta^i_n:=\vartheta_{N(t_i)+n}
\wedge t_{i+1}$, $n\geq 1$. In each subinterval 
$[\vartheta^i_{n-1},\vartheta^i_n]$, we can write equation 
$(\ref{dynamicsShort})$ as 
\begin{eqnarray}
R^{sh}(\vartheta^i_{n-1}+t)=R^{sh}(\vartheta^i_{n-1})+\int_0^tk^{
sh}\left(b(R(\vartheta^i_{n-1}))-R^{sh}(\vartheta^i_{n-1}+s)\right
)ds\non\\
+\int_0^t\overline {\sigma}\left(|R(\vartheta^i_{n-1})-R^{sh}(\vartheta^
i_{n-1}+s)|^2\right)\sqrt {|R^{sh}(\vartheta^i_{n-1}+s)|}dW^{\vartheta^
i_{n-1}}(s)\label{IkedaWatanabe}\\
0\leq t\leq\vartheta^i_n-\vartheta^i_{n-1},\non\end{eqnarray}
where $W^{\vartheta^i_{n-1}}(s):=W(\vartheta^i_{n-1}+s)-W(\vartheta^
i_{n-1})$. 
Since $W$ is independent of $N$, $W^{\vartheta^i_{n-1}}$ is a standard 
Brownian motion, independent of $\vartheta^i_n-\vartheta^i_{n-1}$. Moreover, if 
$R^{sh}(\vartheta^i_{n-1})$ is $\{\mathcal{F}_{\vartheta^i_{n-1}}
\}$-measurable (hence
$\big(R(\vartheta^i_{n-1}),R^{sh}(\vartheta^i_{n-1})\big)$ is $\{\mathcal{
F}_{\vartheta^i_{n-1}}\}$-measurable) $W^{\vartheta^i_{n-1}}$ 
is independent of $\big(R(\vartheta^i_{n-1}),R^{sh}(\vartheta^i_{
n-1})\big)$. 
The diffusion coefficient in 
$(\ref{IkedaWatanabe})$ is locally Holder continuous by  
$(\ref{sigmabarproperty1})$, and has sublinear growth by 
$(\ref{sigmabarproperty2})$. Therefore, by the Corollary 
to Theorem 3.2, Chapter 4, of \cite{IkedaWatanabe}, there exists one and only one 
strong solution to $(\ref{IkedaWatanabe})$ 
(the Corollary to Theorem 3.2 of \cite{IkedaWatanabe} assumes global Holder 
continuity, but, as pointed out in the comment 
immediately preceding Theorem 3.2, its statement can be 
localized and it yields existence and uniqueness of the 
strong solution up to the explosion time; $(\ref{sigmabarproperty1})$ and 
Theorem 2.4, Chapter 4, of \cite{IkedaWatanabe} ensure that the 
explosion time is infinite). 
Then $R^{sh}(\vartheta^i_{n-1}+t)$ is pathwise uniquely defined for 
$0\leq t\leq\vartheta^i_n-\vartheta^i_{n-1}$ and $R^{sh}(\vartheta^
i_n)$ is $\{\mathcal{F}_{\vartheta^i_n}\}$-measurable. 
Since $R^{sh}(\vartheta^i_0)=R^{sh}(t_i)$ is $\{\mathcal{F}_{t_i}
\}$-measurable, i.e. 
$\{\mathcal{F}_{\vartheta^i_0}\}$-measurable, we see, by induction, that 
$R^{sh}$ is pathwise uniquely defined on $[t_i,t_{i+1}]$ and 
$R^{sh}(t_{i+1})$ is $\{\mathcal{F}_{t_{i+1}}\}$-measurable. 
By setting  
\[\Pi (t_{i+1})=\gamma\left(\Pi (t_i),R(t_{i+1}),R^{sh}(t_{i+1})\right
)+\epsilon_{i+1},\]
our claim is proved. 
Finally, let us show that $R^{sh}(t)>0$ for all $t\geq 0$, almost 
surely. Let $\alpha_n:=\inf\{t\geq 0:\,R^{sh}(t)\leq\frac 1n\}$. Then it will be 
enough to show, for every $z_0>0$, for $R^{sh}_0=z_0$, that
\[\P (\alpha_n\leq t)\rightarrow_{n\rightarrow\infty}0,\qquad\forall 
t>0.\]
To see this, consider, for $z>0$, the function 
\[V_1(z):=z^2-\ln(z).\]
We have 
\[\begin{array}{cc}
&k^{sh}\left(b(r)-z\right)V_1(z)'+\frac 12\overline {\sigma}^2\left
(|r-z|^2\right)V_1(z)^{\prime\prime}\\
&=k^{sh}\left(b(r)-z\right)\big(2z-\frac 1z\big)+\frac 12\overline {
\sigma}^2\left(|r-z|^2\right)\big(2z+\frac 1z\big)\\
&=\1_{\{z\leq\overline r\}}\frac 1z\bigg(\frac 12\overline {\sigma}^
2\left(|r-z|^2\right)-k^{sh}b(r)\bigg)+\1_{\{z>\overline r\}}\frac 
1z\bigg(\frac 12\overline {\sigma}^2\left(|r-z|^2\right)-k^{sh}b(
r)\bigg)\\
&\qquad\qquad\qquad\qquad +k^{sh}b(r)+2z\bigg(\frac 12\overline {
\sigma}^2\left(|r-z|^2\right)+k^{sh}b(r)-k^{sh}z\bigg)\\
&\leq\frac 1{2\overline r}\overline {\sigma}^2\left(|r-z|^2\right
)+k^{sh}b(r)+2z\bigg(\frac 12\overline {\sigma}^2\left(|r-z|^2\right
)+k^{sh}b(r)\bigg)\\
&\leq C(1+z^2),\end{array}
\]
where the last but one inequality follows from 
$(\ref{bandsigmaproperty})$ and $z>0$, 
and the last one follows from 
$(\ref{sigmabarproperty2})$. 
Let $\beta_k:=\inf\{t\geq 0:\,R^{sh}(t)\geq k\}$. By applying Ito's 
formula and taking expectations, we obtain 
\begin{eqnarray*}
&&\E[V_1(R^{sh}(t\wedge\alpha_n\wedge\beta_k)]\\
&&\leq V_1(z_0)+C\E\bigg[\int_0^{t\wedge\alpha_n\wedge\beta_k}\big
(1+R^{sh}(s)^2\big)ds\\
&&\leq V_1(z_0)+C\int_0^t\bigg(1+\E\bigg[V_1(R^{sh}(s\wedge\alpha_
n\wedge\beta_k)\bigg]\bigg)ds,\end{eqnarray*}
which implies, by Gronwall's Lemma and by taking limits 
as $k\rightarrow\infty$,  
\[\E[V_1(R^{sh}(t\wedge\alpha_n)]\leq\big(V_1(z_0)+Ct\big)e^{Ct},\]
and hence, 
\[\ln(n)\P (\alpha_n\leq t)\leq\big(V_1(z_0)+Ct\big)e^{Ct}.\]
\hfill$\Box$
\vskip.2in

\noindent {\bf Proof of Lemma \ref{param-prop}}
\vskip.1in

Let us show preliminarly that, for any $T>0$, 
\begin{equation}\E[R^{sh,\pi}_{(r,z)}(t)]\leq C_T(1+z),\qquad 0\leq 
t\leq T,\,\,\,\,\pi\in\R,\,\,\,r\in (\underline r,\overline r),\,\,\,
z>0,\label{mom1}\end{equation}
and, for every $q\geq 2$, 
\begin{equation}\E[R^{sh,\pi}_{(r,z)}(t)^q]\leq C_T(1+z^q),\qquad 
0\leq t\leq T,\,\,\,\,\pi\in\R,\,\,\,r\in (\underline r,\overline 
r),\,\,\,z>0.\label{momp}\end{equation}
In order to prove $(\ref{momp})$, consider a sequence of bounded, nonnegative $
{\cal C}^2$ 
functions $\{f_n\}$ such that $f_n(z)=z^q$ for 
$0<z\leq n$ and $f_n(z)\leq z^q$ for all $z>0$, and let $\alpha_n
:=\inf\{t\geq 0:\,R^{sh,\pi}_{(r,z)}(t)\geq n\}$. 
By applying Ito's Lemma to the semimartingale $R^{sh,\pi}_{(r,z)}$, 
and to the function $f_n$, and taking expectations, we 
obtain 
\begin{eqnarray*}
&&\E[R^{sh,\pi}_{(r,z)}(t\wedge\alpha_n)^q]\\
&=&\E[f_n(R^{sh,\pi}_{(r,z)}(t\wedge\alpha_n))]\\
&=&f_n(z)+k^{sh}\E\bigg[\int_0^{t\wedge\alpha_n}\bigg([(R^{\pi}_r
(s))-R^{sh,\pi}_{(r,z)}(s)]R^{sh,\pi}_{(r,z)}(s)^{q-1})\\
&&\qquad\qquad\qquad\qquad\qquad\qquad\qquad +\frac {q(q-1)}2\overline {
\sigma}^2(|R^{\pi}(s)-R^{sh,\pi}(s)|^2)R^{sh,\pi}(s)^{q-1}\bigg)d
s\bigg]\\
&&\leq z^q+C\E\bigg[\int_0^{t\wedge\alpha_n}\bigg(1+R^{sh,\pi}_{(
r,z)}(s))^q\bigg)ds\bigg]\\
&&\leq z^q+C\int_0^t\bigg(1+\E\bigg[R^{sh,\pi}_{(r,z)}(s\wedge\alpha_
n))^q\bigg]\bigg)ds,\end{eqnarray*}
where the last but one inequality follows from 
$(\ref{sigmabarproperty2})$. 

\noindent Therefore, by Gronwall's Lemma and Fatou's Lemma, 
\[\E[R^{sh,\pi}_{(r,z)}(t)^q]\leq\bigg(z^q+CT\bigg)e^{CT},\quad 0
\leq t\leq T.\]
$(\ref{mom1})$ can be proved in an analogous manner. 
\noindent$(\ref{mom1})$ yields both  
\[\E\bigg[e^{-\int_0^{T-t}R^{sh,\pi}_{(r,z)}(s)ds}\big|f\big(\pi 
,R^{\pi}_r(T-t),R^{sh,\pi}_{(r,z)}(T-t)\big)\big|\bigg]<\infty\;\;\mbox{\rm and}\;\;|F(t,\pi ,r,z)|\leq C_0'e^{C_1\pi}(1+
z),\]
for $(t,\pi ,r,z)\in [0,T]\times\R\times (\underline r,\overline 
r)\times (0,\infty )$. 
We are left with proving continuity of $F$. Let 
$(t_n,\pi_n,r_n,z_n)\rightarrow (t,\pi ,r,z)$. Then $\{R^{\pi}_{r_
n}\}$ converges to $R^{\pi}_r$ uniformly 
over compact time intervals, almost surely. In addition 
$\{R^{sh,\pi}_{(r_n,z_n)}\}$ is relatively compact by Theorems 3.8.6 and 
3.8.7 of \cite{EthierKurtz}, the Burkholder-Davies-Gundy 
inequality and $(\ref{momp})$ with $q=4$, and every limit point is 
continuous by Theorem 3.10.2 of \cite{EthierKurtz}. Therefore $\{
(R^{\pi}_{r_n},R^{sh,\pi}_{(r_n,z_n)})\}$ is 
relatively compact. By Theorem 2.7 of 
\cite{KurtzProtter}, every limit point of $\{(R^{\pi}_{r_n},R^{sh
,\pi}_{(r_n,z_n)})\}$ 
satisfies $(\ref{param})$ with $(R^{\pi}_0,R^{sh,\pi}_0)=(r,z)$. 
Since the solution to $(\ref{param})$ is (strongly and 
hence weakly) unique, we can conclude that 
$\{(R^{\pi}_{r_n},R^{sh,\pi}_{(r_n,z_n)})\}$ converges weakly to $
(R^{\pi}_r,R^{sh,\pi}_{(r,z)})$. 
The assertion then follows by observing that $(\ref{momp})$ 
implies that the random variables 
\[\bigg\{e^{-\int_0^{T-t_n}R^{sh,\pi}_{(r_n,z_n)}(s)ds}f\left(\pi_
n,R^{\pi}_{r_n}(T-t_n),R^{sh,\pi}_{(r_n,z_n)}(T-t_n)\right)\bigg\}\]
are uniformly integrable. 
\hfill$\Box$
\vskip.2in

\noindent {\bf Proof of Proposition \ref{recursion}}
\vskip.1in

Recall that $t_M\leq T<t_M+1$. For $t_M\leq s\leq T$, 
\begin{eqnarray*}
&&\E\bigg[\exp\bigg(-\int_s^TR^{sh}(u)du\bigg)Y(T)^p\Phi (\Pi (T)
,R(T),R^{sh}(T))\bigg|{\cal F}_s\bigg]\\
&=&\E\bigg[\exp\bigg(-\int_s^TR^{sh}(u)du\bigg)Y(s)^pe^{p(T-s)\Pi 
(s)}\,\Phi (\Pi (s),R(T),R^{sh}(T))\bigg|(Y(s),\Pi (s),R(s),R^{sh}
(s))\bigg]\end{eqnarray*}
Therefore, 
\begin{eqnarray*}
&&\varphi (s,y,\pi ,r,z)\\
&=&\E\bigg[\exp\bigg(-\int_s^TR^{sh}(u)du\bigg)Y(s)^pe^{p(T-s)\Pi 
(s)}\,\Phi (\Pi (s),R(T),R^{sh}(T))\bigg|(Y,\Pi ,R,R^{sh})(s)=(y,
\pi ,r,z)\bigg]\\
&=&y^pe^{p(T-s)\pi}\E\bigg[\exp\bigg(-\int_s^TR^{sh,\pi}(u)du\bigg
)\Phi (\pi ,R^{\pi}(T),R^{sh,\pi}(T))\bigg|(R^{\pi},R^{sh,\pi})(s
)=(r,z)\bigg]\\
&=&y^pe^{p(T-s)\pi}\E\bigg[\exp\bigg(-\int_0^{T-s}R^{sh,\pi}(u)du\bigg
)\Phi (\pi ,R^{\pi}(T-s),R^{sh,\pi}(T-s))\bigg|(R^{\pi},R^{sh,\pi}
)(0)=(r,z)\bigg]\\
&=&y^pe^{p(T-s)\pi}\varphi^M(s-t_M,\pi ,r,z),\end{eqnarray*}
where the last but one equality follows from the fact 
that $(R^{\pi},R^{sh,\pi})$ is time homogeneous. 

For $t_{M-1}\leq s<t_M$, 
\begin{eqnarray*}
&=&\E\bigg[\exp\bigg(-\int_s^TR^{sh}(u)du\bigg)Y(T)^p\Phi (\Pi (T
),R(T),R^{sh}(T))\bigg|{\cal F}_s\bigg]\\
&=&\E\bigg[\E\bigg[\exp\bigg(-\int_s^TR^{sh}(u)du\bigg)Y(T)^p\Phi 
(\Pi (T),R(T),R^{sh}(T))\bigg|{\cal F}_{t_M}\bigg]\bigg|{\cal F}_
s\bigg]\\
&=&\E\bigg[\exp\bigg(-\int_s^{t_M}R^{sh}(u)du\bigg)\E\bigg[\exp\bigg
(-\int_{t_M}^TR^{sh}(u)du\bigg)Y(T)^p\Phi (\Pi (T),R(T),R^{sh}(T)
)\bigg|{\cal F}_{t_M}\bigg]\bigg|{\cal F}_s\bigg]\\
&=&\E\bigg[\exp\bigg(-\int_s^{t_M}R^{sh}(u)du\bigg)\varphi (t_M,Y
(t_M),\Pi (t_M),R(t_M),R^{sh}(t_M))\bigg|{\cal F}_s\bigg]\\
&=&\E\bigg[\exp\bigg(-\int_s^{t_M}R^{sh}(u)du\bigg)Y(t_M)^pe^{p(T
-t_M)\Pi (t_M)}\varphi^M(0,\Pi (t_M),R(t_M),R^{sh}(t_M))\bigg|{\cal F}_
s\bigg]\\
&=&\E\bigg[\E\bigg[\exp\bigg(-\int_s^{t_M}R^{sh}(u)du\bigg)Y(t_M)^
pe^{p(T-t_M)\Pi (t_M)}\varphi^M(0,\Pi (t_M),R(t_M),R^{sh}(t_M))\bigg
|{\cal F}_{t_M^{-}}\bigg]\bigg|{\cal F}_s\bigg]\\
\ &=&\E\bigg[\exp\bigg(-\int_s^{t_M}R^{sh}(u)du\bigg)Y(t_M)^p\,B\left
(e^{p(T-t_M)\cdot}\varphi^M(0,\cdot ,\cdot ,\cdot )\right)(\Pi (s
),R(t_M),R^{sh}(t_M))\bigg|{\cal F}_s\bigg]\\
&=&\E\bigg[\exp\bigg(-\int_s^{t_M}R^{sh}(u)du\bigg)Y(s)^pe^{p(t_M
-s)\Pi (s)}\,B\left(e^{p(T-t_M)\cdot}\varphi^M(0,\cdot ,\cdot ,\cdot 
)\right)(\Pi (s),R(t_M),R^{sh}(t_M))\bigg|(Y,\Pi,R,R^{sh})(s)\bigg]\end{eqnarray*}
Therefore, 
\begin{eqnarray*}
&&\varphi (s,y,\pi ,r,z)\\
&=&\E\bigg[\exp\bigg(-\int_s^{t_M}R^{sh}(u)du\bigg)Y(s)^pe^{p(t_M
-s)\Pi (s)}\,B\left(e^{p(T-t_M)\cdot}\varphi^M(0,\cdot ,\cdot ,\cdot 
)\right)\bigg|(Y,\Pi ,R,R^{sh})(s)=(y,\pi ,r,z)\bigg]\\
&=&y^pe^{p(t_M-s)\pi}\E\bigg[\exp\bigg(-\int_s^{t_M}R^{sh,\pi}(u)
du\bigg)B\left(e^{p(T-t_M)\cdot}\varphi^M(0,\cdot ,\cdot ,\cdot )\right
)(\pi ,R^{\pi}(t_M),R^{sh,\pi}(t_M))\bigg|(R^{\pi},R^{sh,\pi})(t)
=(r,z)\bigg]\\
&=&y^pe^{p(t_M-s)\pi}\E\bigg[\exp\bigg(-\int_0^{t_M-s}R^{sh}_{(r,
z)}(u)du\bigg)B\left(e^{p(T-t_M)\cdot}\varphi^M(0,\cdot ,\cdot ,\cdot 
)\right)(\pi ,R^{\pi}_r(t_M-s),R^{sh,\pi}_{(r,z)}(t_M-s))\bigg]\\
&=&y^pe^{p(t_M-s)\pi}\varphi^{M-1}(s-t_{M-1},\pi ,r,z).\end{eqnarray*}

For $t_{i-1}\leq s<t_i$, $i\leq M-1$, assuming inductively that 
$\varphi (t_i,y,\pi ,r,z)=y^pe^{pt_1\pi}\varphi^i(0,\pi ,r,z)$, the same computations 
as for $i=M$ (with $t_M$ replaced by $t_i$ and $T$ replaced by 
$t_{i+1}$) yield 
\[\varphi (s,y,\pi ,r,z)=y^pe^{p(t_i-s)\pi}\varphi^{i-1}(s-t_{i-1}
,\pi ,r,z).\]
\hfill $\Box$
\vskip.2in

\noindent {\bf Proof of Theorem \ref{sol}}
\vskip.1in

In order to adjust to the general formulation of 
\cite{CostantiniPapiD'Ippoliti2012}, we note first of all that $(\ref{valeq})$ can 
be viewed as a simple Partial Integro-Differential 
Equation, namely 
\begin{equation}\begin{array}{c}
\frac {\partial\psi}{\partial t}(t,r,z)+k^{sh}\left(b(r)-z\right)\frac {
\partial\psi}{\partial z}(t,r,z)+\frac 12\overline {\sigma}^2\left
(|r-z|^2\right)z\frac {\partial^2\psi}{\partial z^2}(t,r,z)\\
\qquad\qquad\quad\quad +\int_{(\underline r,\overline r)}\left[\psi\left
(t,r',z\right)-\psi\left(t,r,z\right)\right]\mu (\pi ,r,dr')-z\psi 
(t,r,z)=0\end{array}
\label{PIDE}\end{equation}
where 
\begin{equation}\mu (\pi ,r,A):=\overline {\lambda}\sum_{k=-m}^m\1_
A(r+k\delta )q(\pi ,r,k\delta ),\qquad A\in {\cal B}\big((\underline 
r,\overline r)\big).\label{PIDE-mu}\end{equation}
Therefore, for each fixed $\pi$, $(\ref{PIDE})$ is of the form 
$\mbox{\rm (\ref{eq-1})-(\ref{op})}$ with 
\[\begin{array}{c}
L^{\pi}\psi (t,r,z)=k^{sh}\left(b(r)-z\right)\frac {\partial\psi}{
\partial z}(t,r,z)+\frac 12\overline {\sigma}^2\left(|r-z|^2\right
)z\frac {\partial^2\psi}{\partial z^2}(t,r,z)\\
\qquad\qquad\qquad\qquad\qquad +\int_{(\underline r,\overline r)}\left
[\psi\left(t,r',z\right)-\psi\left(t,r,z\right)\right]\mu (\pi ,r
,dr'),\end{array}
\]
\[g(r,z)=0,\qquad c(r,z)=z.\]

The proof thus consists in verifying the assumptions of 
\cite{CostantiniPapiD'Ippoliti2012}. 

Assumptions (H1), (H2) of theorem \ref{CPD} are satisfied by 
$(\ref{continuitapelambda})$ and 
$(\ref{sigmabarproperty1})$. 

As far as (H3) is concerned, it is sufficient to find, for 
each fixed $\pi$, 
$V_0^{\pi}\in {\cal C}^2(\underline r,\overline r)$, $V_1^{\pi}\in 
{\cal C}^2((0,\infty ))$ nonnegative and such that 
\begin{equation}\lim_{r\rightarrow\underline r^{+}}V_0^{\pi}(r)=\infty 
,\quad\lim_{r\rightarrow\overline r^{-}}V_0^{\pi}(r)=\infty ,\quad 
L^{\pi}V_0^{\pi}(r,z)\leq C^{_{_{\pi}}}(1+V_0^{\pi}(r)),\label{V0cond}\end{equation}
\begin{equation}\lim_{z\rightarrow 0^{+}}V_1^{\pi}(z)=\infty ,\quad\lim_{
z\rightarrow\infty}V_1^{\pi}(z)=\infty ,\quad L^{\pi}V_1^{\pi}(r,
z)\leq C^{\pi}(1+V_1^{\pi}(z)).\label{V1cond}\end{equation}
Then Assumption (H3) will be verified by 
\begin{equation}V^{\pi}(r,z)=V_0^{\pi}(r)+V_1^{\pi}(z).\label{V}\end{equation}
By the same computations as in the proof of Theorem 
\ref{theowellpos}, 
and by $(\ref{sigmabarproperty2})$,  
$(\ref{bandsigmaproperty})$, we can see that 
\begin{equation}V^{\pi}_1(z)=z^2-\ln(z).\label{V1}\end{equation}
satisfies $(\ref{V1cond})$.
For each fixed $\pi$, $V^{\pi}_0$ can be constructed in the following 
way. By $(\ref{derivatefinitepelambda})$, there exist 
$\underline {\beta}^{\pi}=\underline {\beta}\in {\cal C}^1([\underline 
r,\overline r])$, $\overline {\beta}^{\pi}=\overline {\beta}\in {\cal C}^
1([\underline r,\overline r])$ such that 
\[\underline {\beta}(\underline r)=0,\quad\underline {\beta}(r)>0\mbox{\rm \ for }
r>\underline r,\quad\underline {\beta}\mbox{\rm \ is nondecreasing}
,\]
\[\underline {\beta}(r)\geq\max_{h=1,..,m}p(\pi ,r+h\delta ,-h\delta 
),\]
and 
\[\overline {\beta}(\overline r)=0,\quad\overline {\beta}(r)>0\mbox{\rm \ for }
r<\,\overline r,\quad\overline {\beta}\mbox{\rm \ is nonincreasing}
,\]
\[\overline {\beta}(r)\geq\max_{h=1,..,m}p(\pi ,r-h\delta ,h\delta 
).\]
Setting 
\begin{equation}V^{\pi}_0(r):=\int_r^{\overline r}\frac 1{\underline {
\beta}(s)}\,ds\,+\,\int_{\underline r}^r\frac 1{\overline {\beta}
(s)}\,ds,\label{V0}\end{equation}
we have, for $k=1,...,m$, $r-k\delta\in (\underline r,\overline r
)$, 
\begin{eqnarray*}
&&\overline {\lambda}\left[V^{\pi}_0(r-k\delta )-V^{\pi}_0(r)\right
]q(\pi ,r,-k\delta )\\
&=&\lambda (\pi ,r)\left[\int_{r-k\delta}^r\frac 1{\underline {\beta}
(s)}\,ds\,-\,\int_{r-k\delta}^r\frac 1{\overline {\beta}(s)}\,ds\right
]p(\pi ,r,-k\delta )\\
&\leq&\overline {\lambda}\,\frac {m\delta}{\underline {\beta}(r-k
\delta )}p(\pi ,r,-k\delta )\,\\
&\leq&\overline {\lambda}m\delta .\end{eqnarray*}
Analogously, for $k=1,...,m$, $r+k\delta\in (\underline r,\overline 
r)$, 
\begin{eqnarray*}
&&\overline {\lambda}\left[V^{\pi}_0(r+k\delta )-V^{\pi}_0(r)\right
]q(\pi ,r,k\delta )\\
&=&\lambda (\pi ,r)\left[-\int_r^{r+k\delta}\frac 1{\underline {\beta}
(s)}\,ds\,+\,\int_r^{r+k\delta}\frac 1{\overline {\beta}(s)}\,ds\right
]p(\pi ,r,k\delta )\\
&\leq&\overline {\lambda}\,\frac {m\delta}{\overline {\beta}(r+k\delta 
)}p(\pi ,r,k\delta )\,\\
&\leq&\overline {\lambda}m\delta .\end{eqnarray*}
Thus $(\ref{V0cond})$ is satisfied. 
 
We now turn to assumption (H4). 
The conditions on $c(z):=z$ and $g(z)\equiv 0$ are clearly satisfied 
(note that the lower bound of $c$ needs not be positive). 
The terminal values are 
\begin{eqnarray}
\Psi^M&=&\Phi ,\non\\
\Psi^{M-1}&=&B\left(e^{p(T-t_M)\cdot}\varphi^M(0,\cdot ,\cdot ,\cdot 
)\right),\label{terminalPsi}\\
\Psi^i&=&B\left(e^{pt_1\cdot}\varphi^{i+1}(0,\cdot ,\cdot ,\cdot )\right
),\quad\mbox{\rm for }i=0,...,M-2.\non\end{eqnarray}
Taking the function $l$ as $l(s)=\sqrt {s}$,  
the condition on the terminal value $\Psi^i$ becomes 
\begin{equation}|\Psi^i(\pi ,r,z)|^{3/2}\leq C^{i,\pi}\left(1+V^{
\pi}(r,z)\right).\label{h4}\end{equation}
Suppose first that $t_M<T$. $\Psi^M=\Phi$ verifies the growth condition $
(\ref{payoffgr})$. 
Since 
\begin{equation}(1+z)^{3/2}\leq 4\big(1+z^2-\ln(z)\big)\leq4\left(1+V^{\pi}(r,z)\right),\label{ineq}\end{equation}
$\Psi^M$ satisfies $(\ref{h4})$ as well. Assuming inductively 
that $\Psi^{i+1}$
satisfies $(\ref{h4})$, by Theorem \ref{CPD}, 
$\varphi^{i+1}(\cdot ,\pi ,\cdot ,\cdot )$, defined in Proposition \ref{recursion}, 
is the unique viscosity solution of $(\ref{valeq})$-$(\ref{terminal})$
satisfying $(\ref{h4})$ uniformly in time. On the other hand, by 
Lemma \ref{param-prop}, $\varphi^{i+1}(\cdot ,\pi ,\cdot ,\cdot )$ verifies also 
$(\ref{payoffgr})$ uniformly in time, and hence, by 
$(\ref{ineq})$, is the unique 
viscosity solution of $(\ref{valeq})$-$(\ref{terminal})$ 
verifying $(\ref{payoffgr})$ uniformly in time. Moreover, 
by $(\ref{terminal})$ and $(\ref{B-growth})$, 
$\Psi^i$ satisfies 
$(\ref{payoffgr})$ and hence, by $(\ref{ineq})$, $(\ref{h4})$ as well. 

\noin If $t_M=T$, $\Psi^{M-1}=B\Phi$ satisfies $(\ref{h4})$ by 
$(\ref{B-growth})$ and $(\ref{ineq})$, and the induction 
starts form $M-1$. 
\hfill $\Box$

\section{Numerical scheme}\label{sectionTheImplementation}

\renewcommand {\theequation}{C.\arabic{equation}}
\setcounter{equation}{0}

In this section we present a finite difference scheme to solve numerically the pricing problem 
of an interest rate financial derivative under our model of 
Section \ref{sectionTheModel}. Let us remind that in 
recent years a great deal has been done for the numerical approximation of viscosity solutions 
for second order problems. In particular, we refer the reader to the fundamental paper by 
Barles and Souganidis \cite{BarlesSouganidis1991} who first showed convergence results for 
a large class of numerical schemes to the solution of fully nonlinear second order elliptic 
or parabolic partial differential equations. In addition we refer to \cite{BrianiNataliniChioma2004} 
for the extension of their arguments to the class of 
numerical schemes for integro-differential equations. 

Our numerical scheme is applied to the sequence of partial differential equations and their 
final conditions (\ref{valeq})-(\ref{terminal}) and is a modification of the scheme proposed in Zhu and Li 
\cite{ZhuLi2003}. An important point is that we do not 
impose any artificial condition at the 
boundary $z=0$: This is appropriate because of assumption 
(\ref{bandsigmaproperty}) and makes the scheme more 
accurate. 

For each interval $[t_i,t_{i+1}]$ and for every fixed value for 
the inflation rate $\pi$, we compute 
the solution of problem (\ref{valeq})-(\ref{terminal}) by the following method. We convert 
the problem into an initial-value problem letting $\tau =t_1-t$, 
and we compute the approximate value of the solution at $t_1-\tau_
n$, $\tau_n=n\Delta\tau$, $n=0,\ldots ,N$, $r_h=\underline r+h\delta$, $
h=1,\ldots ,H-1$, 
$z_j=j\Delta z$, $j=0,\ldots ,J$, where $\Delta\tau =t_1/N$, $\Delta 
z=z_{\max}/J$, $N$, $H$, $J$, being positive integers, 
such that $H\geq (\overline r-\underline r)/\delta >H-1$, and $z_{\max}
>>0$. Therefore, the numerical domain of the problem 
is $[0,t_1]\times [0,z_{\max}]$. Let $\psi^n_{h,j}$ denote the approximate value 
of the solution at the point $(t_1-\tau_n,r_h,z_j)$. 

With $r$ discretized as above, the non-local term in  
equation (\ref{valeq}) reduces to a linear operator in 
$\R^{H-1}$. 

As far as the partial differential equation in (\ref{valeq}) 
is concerned, at a point point $(\tau_n,r_h,z_j)$ with $z_j>0$, it can be discretized 
by the following second order approximation:
\begin{eqnarray}
\label{num.schema.int}&&\frac {\psi^{n+1}_{h,j}-\psi^n_{h,j}}{\Delta
\tau}=\frac {k^{sh}(b(r_h)-z_j)}{4\Delta z}\left(\psi^{n+1}_{h,j+
1}-\psi^{n+1}_{h,j-1}+\psi^n_{h,j+1}-\psi^n_{h,j-1}\right)+\non\\
&&\frac 1{4\Delta z^2}z_j\overline {\sigma}^2(|r_h-z_j|^2)\left(\psi^{
n+1}_{h,j+1}-2\psi^{n+1}_{h,j}+\psi^{n+1}_{h,j-1}+\psi^n_{h,j+1}-
2\psi^n_{h,j}+\psi^n_{h,j-1}\right)\\
&&+\overline {\lambda}\sum_{k=-\min(m,h-1)}^{\min(m,H-h-1)}\left[
\psi^n_{h+k,j}-\psi^n_{h,j}\right]q(\pi ,r_h,k\delta )-\frac {z_j}
2\left(\psi^{n+1}_{h,j}+\psi^n_{h,j}\right)\non,\end{eqnarray}
for any $h=1,\ldots ,H-1$, $j=1,\ldots J-1$, $n=0,\ldots ,N-1$. At the boundary $
z=0$, the partial differential equation 
in (\ref{valeq}) becomes a hyperbolic equation with respect to $z$, with a nonlocal term:
\begin{eqnarray}
\label{hyp.eq}\frac {\partial\psi}{\partial t}(t,r,z)=k^{sh}b(r)\frac {
\partial\psi}{\partial z}(t,r,z)+\overline {\lambda}\sum_{k=-m}^m\left
[\psi (t,r+k\delta ,z)-\psi (t,r,z)\right]q(\pi ,r,k\delta )\end{eqnarray}
Since $b(r)>0$, the value of $\psi$ on the boundary $z=0$ should be determined by the value of $
\psi$ 
inside the domain. Hence, we approximate 
$(\ref{hyp.eq})$ by the following scheme:
\begin{eqnarray}
\label{num.s.b}&&\frac {\psi^{n+1}_{h,0}-\psi^n_{h,0}}{\Delta\tau}
=\frac {k^{sh}b(r_h)}{4\Delta z}\left(-\psi^{n+1}_{h,2}+4\psi^{n+
1}_{h,1}-3\psi^{n+1}_{h,0}-\psi^n_{h,2}+4\psi^n_{h,1}-3\psi^n_{h,
0}\right)\nonumber\\
&&+\overline {\lambda}\sum_{k=-\min(m,h-1)}^{\min(m,H-h-1)}\left[
\psi^n_{h+k,0}-\psi^n_{h,0}\right]q(\pi ,r_h,k\delta ),\end{eqnarray}
for any $h=1,\ldots ,H-1$, $n=0,\ldots ,N-1$. Here $\partial\psi 
/\partial z$ is discretized by a one-side second order scheme 
so that all the node points involved are in the computational domain. Moreover we assign the initial
datum at $\psi^0_{h,j}=\psi (t_1 ,r_h,z_j)=\Psi^i(\pi ,r_h,z_j
)$, for any $j=0,\ldots ,J$. At the boundary $z=z_{\max}$ we adopt 
the Neumann boundary condition $\psi^n_{h,J}=\psi^n_{h,J-1}$, for any $
n=0,\ldots ,N$.
When $\psi^n_{h,j}$, $h=1,\ldots ,H-1$, $j=0,\ldots ,J$ are known from (\ref{num.schema.int}) and (\ref{num.s.b}), 
we can determine $\psi^{n+1}_{h,j}$, for any $h$ and $j$. Therefore, we can perform this procedure for $
n=0,\ldots ,N-1$ 
successively and finally find $\psi^N_{h,j}$, for any $h$ and $j$. Since truncation errors are second order everywhere, 
at least for a smooth enough solution it may be expected 
that the global error is $O(\Delta\tau ,\Delta z)$, 
see ~\cite{Marcozzi2001} and ~\cite{ZhuLi2003}.  
We can rewrite equations 
(\ref{num.schema.int}) and (\ref{num.s.b}) throughout using the following quantities:
\begin{eqnarray}
\label{coeff.nu}\nu_{h,j}=\frac {k^{sh}}4(b(r_h)-z_j)\frac {\Delta
\tau}{\Delta z},\qquad\qquad h=1,\ldots ,H-1,\;\;\;\ j=0,\ldots J
-1,\end{eqnarray}
\begin{eqnarray}
\label{coeff.xi}\xi_{h,j}&=&\frac {z_j}4\overline {\sigma}^2(|r_h
-z_j|^2)\frac {\Delta\tau}{(\Delta z)^2},\qquad\qquad h=1,\ldots 
,H-1,\;\;\;\ j=1,\ldots J-1,\\
\xi_{h,0}&=&\frac 34k^{sh}b(r_h)\frac {\Delta\tau}{\Delta z}.\end{eqnarray}
\begin{eqnarray}
\label{coeff.eta}\eta_{h,j}=\xi_{h,j}+\nu_{h,j},\qquad\;\;\theta_{
h,j}=\nu_{h,j}-\xi_{h,j},\qquad\;\;w_{h,j}=2\xi_{h,j}+\frac {\Delta
\tau}2z_j+1.\end{eqnarray}
In addition, for every $n=0,\ldots ,N-1$, $h=1,\ldots ,H-1$, $j=1
,\ldots ,J-1$, we define
\begin{eqnarray}
\label{coeff.Q}Q^n_{h,j}&=&\psi^n_{h,j}+\nu_{h,j}\left(\psi^n_{h,
j+1}-\psi^n_{h,j-1}\right)+\xi_{h,j}\left(\psi^n_{h,j+1}-2\psi^n_{
h,j}+\psi^n_{h,j-1}\right)\nonumber\\
&&+\Delta\tau\lambda (\pi ,r_h)\sum_{k=-\min(m,h)}^{\min(m,H-h)}\left
[\psi^n_{h+k,j}-\psi^n_{h,j}\right]p(\pi ,r_h,k\delta )-\frac {z_
j\Delta\tau}2\psi^n_{h,j},\\
Q^n_{h,0}&=&\psi^n_{h,0}+\nu_{h,0}\left(-\psi^n_{h,2}+4\psi^n_{h,
1}-3\psi^n_{h,0}\right)+\nonumber\\
&&+\Delta\tau\lambda (\pi ,r_h)\sum_{k=-\min(m,h)}^{\min(m,H-h)}\left
[\psi^n_{h+k,0}-\psi^n_{h,0}\right]p(\pi ,r_h,k\delta ).\end{eqnarray}
\begin{eqnarray}
\label{matrix}\!A_h\!=\!\left[\begin{array}{cccccc}
1+\xi_{h,0}&-4\nu_{h,0}&\nu_{h,0}&0&\cdots&0\\
\theta_{h,1}&w_{h,1}&-\eta_{h,1}&0&\cdots&0\\
0&\theta_{h,2}&w_{h,2}&-\eta_{h,2}&\cdots&0\\
\vdots&\vdots&\vdots&\vdots&\vdots&\vdots\\
0&0&0&0&\theta_{h,J-1}&(w_{h,J-1}-\eta_{h,J-1})\end{array}
\right]\;K_h^n=\left[\begin{array}{c}
Q^n_{h,0}\\
Q^n_{h,1}\\
\vdots\\
Q^n_{h,J-1}\\
\end{array}
\right],\end{eqnarray}
$A_h$ is a $J\times J$ matrix independent of $\psi^{n+1}_{h,.}$ and $
\psi^n_{h,.}$, whereas $K_h^n\in\mathbb{R}^J$ depends 
on the values of the numerical solution at the time step $n$. Therefore, keeping the terms 
which involve $\psi^{n+1}_{h,j}$, for $j=0,\ldots ,J-1$, on the left-hand side of equation 
$(\ref{num.schema.int})$, $(\ref{num.s.b})$  and bringing all the other terms on the right-hand side, 
we easily obtain the following linear system:
\begin{eqnarray}
\label{LS.1}A_h\psi^{n+1}_h=K_h^n\end{eqnarray}
for the computation of the numerical solution at the time step $n
+1$, given by
\begin{eqnarray}
\label{LS.2}\psi^{n+1}_h=\left[\begin{array}{c}
\psi^{n+1}_{h,0}\\
\psi^{n+1}_{h,1}\\
\vdots\\
\psi^{n+1}_{h,J-1}\\
\end{array}
\right],\end{eqnarray}
for any $h=1,\ldots ,H-1$. We observe that the coefficients in (\ref{coeff.eta}) satisfy
\begin{eqnarray}
\label{LS.3}w_{h,j}>|\theta_{h,j}|+|\eta_{h,j}|,\qquad\qquad\mbox{\rm for all $
j=1,\ldots J-1$},\end{eqnarray}
and the same holds for the coefficients in the first row of $A_h$. 
Therefore $A_h$ is strictly diagonally dominant, implying that $A_
h$ 
is invertible; moreover, since $w_{h,j}>1$, for any $j=1,\ldots ,
J-1$, 
the real parts of its eigenvalues are positive (these results follow 
from the well known Gershgorin's circle theorem). 
Therefore system (\ref{LS.1}) admits a unique solution.\\
For each discretized value $\pi$ of the observed inflation rate 
at time $t_i$, the numerical procedure allows to obtain 
$\psi^N_{h,j}=\psi^N_{h,j}(\pi )$, i.e. the approximate value of $
\varphi^i(0,\pi ,r_h,z_j)$, 
for each $h=1,\ldots ,H-1$, $j=0,\ldots ,J$, from the initial datum 
$\Psi^i$ evaluated at $(\pi ,r_{h'},z_{j'})$, $h'=1,\ldots ,H-1$, $
j'=0,\ldots ,J$. For 
$i=M-1,$ for each discretized value of $\pi$ and for each 
$h'=1,\ldots ,H-1$, $j'=0,\ldots ,J$, $\Psi^{M-1}(\pi ,r_{h'},z_{
j'})$ is obtained 
from the payoff $\Phi$ by applying a standard quadrature 
method for the evaluation of the integral operator $B$ 
defined in $(\ref{B})$. For $i<M-1$, $\Psi^i(\pi ,r_{h'},z_{j'})$ is 
obtained analogously from the approximate values of 
$e^{pt_1\pi'}\varphi^{i+1}(0,\pi',r_{h'},z_{j'})$, where $\pi'$ ranges over all discretized 
values of the inflation rate (the grid for the variable $u$ 
in the integral operator $B$ can be chosen so that 
$\gamma (\pi ,r_{h'},z_{j'})+u$ corresponds to a discretized value of the inflation 
rate).

\vspace*{1cm}

\end{document}